\documentclass[journal,12pt,onecolumn,draftclsnofoot]{IEEEtran}

\usepackage{enumitem}
\usepackage[yyyymmdd,hhmmss]{datetime}
\usepackage{amsmath, amsfonts, bm, amssymb, amsthm, comment}
\usepackage{array}
\usepackage{textcomp, mdframed}
\usepackage{stfloats}
\usepackage{verbatim}
\usepackage{graphicx, multirow}
\usepackage{cite}
\usepackage{kotex}
\usepackage[dvipsnames]{xcolor}
\usepackage{epsfig, epstopdf}
\usepackage{ulem}
\usepackage{soul}
\usepackage{algorithm, algorithmicx}
\usepackage{algpseudocode}
\usepackage[justification=centering]{caption}
\usepackage{float}
\usepackage{hyperref}
\usepackage{booktabs}
\usepackage[permil]{overpic}
\usepackage{pict2e}
\usepackage{placeins}
\usepackage{afterpage}
\usepackage{mathtools}
\usepackage{subcaption}

\hyphenation{net-works IEEE-Xplore}

\newtheorem{lemma}{Lemma}
\newtheorem{proposition}{Proposition}

\theoremstyle{definition}
\newtheorem*{remark}{Remark}

\newcommand{\dg}{\dagger}

\newcommand{\KJcancel}[1]{}
\renewcommand{\KJcancel}[1]{{\color{blue}\sout{#1}}}
\newcommand{\KJcancelc}[1]{}

\newcommand{\KJcancelEQ}[1]{}
\renewcommand{\KJcancelEQ}[1]{{\color{red}#1}}

\newcommand{\JHbox}[1]{\vspace{.2in}\noindent\fbox{\begin{minipage}{\columnwidth}#1\end{minipage}}\vspace{.2in}}
\renewcommand{\JHbox}[1]{}

\renewcommand{\Delta}{\delta}

\definecolor{JHBGcolor}{HTML}{C7D7C7}

\sethlcolor{yellow}

\normalem

\setlength{\textfloatsep}{10pt plus 1.0pt minus 2.0pt}
\setlength{\floatsep}{10pt plus 1.0pt minus 2.0pt}

\begin{document}
	\title{Weighted-Sum Energy Efficiency Maximization in User-Centric Uplink Cell-Free Massive MIMO}
	
	\author{Donghwi Kim,~\IEEEmembership{Graduate Student Member,~IEEE,} Liesbet Van der Perre,~\IEEEmembership{Senior Member,~IEEE}, and Wan Choi,~\IEEEmembership{Fellow,~IEEE}
		
		\thanks{D. Kim and W. Choi are with the Department of Electrical and Computer Engineering, Seoul National University (SNU), and the Institute of New Media and Communications, SNU, Seoul 08826, Korea (e-mail: yob0322@snu.ac.kr; wanchoi@snu.ac.kr). (\emph{Corresponding Author: Wan Choi})}
		\thanks{L. Van der Perre is with ESAT-WaveCore, KU Leuven, Ghent, Belgium. (e-mail: liesbet.vanderperre@kuleuven.be).}}
	
	\maketitle
	
	\begin{abstract}
		This paper introduces the weighted-sum energy efficiency (WSEE) as an advanced performance metric designed to represent the uplink energy efficiency (EE) of individual user equipment (UE) in a user-centric Cell-Free massive MIMO (CF-mMIMO) system more accurately. In a realistic user-centric CF-mMIMO context, each UE may exhibit distinct characteristics, such as maximum transmit power limits or specific minimum data rate requirements. By computing the EE of each UE independently and adjusting the weights accordingly, the system can accommodate these unique attributes, thus promoting energy-efficient operation. The uplink WSEE is formulated as a multiple-ratio fractional programming (FP) problem, representing a weighted sum of the EE of individual UEs, which depends on each UE’s transmit power and the combining vector at the central processing unit (CPU). 
		To effectively maximize WSEE, we develop optimization algorithms based on the quadratic transform (QT), which is effective for multiple-ratio FP. By applying QT sequentially to each user’s EE and the uplink SINR, the method converts the nonconvex WSEE objective into tractable subproblems and ensures stable, monotone convergence. We further introduce an approximate variant that alleviates QT’s inherent nonlinearities to accelerate convergence. Compared with global energy efficiency (GEE)–oriented baselines, the proposed algorithms yield simultaneous improvements in user power consumption and spectral efficiency, while also reducing optimization time. Overall, the framework provides a foundation for designing operational strategies tailored to specific system requirements.
	\end{abstract}
	
	\begin{IEEEkeywords}
		Cell-free massive MIMO, user-centric association, weighted-sum energy efficiency (WSEE), fractional programming, quadratic transform
	\end{IEEEkeywords}
	
	\section{Introduction\label{intro}}
	\IEEEPARstart{T}{he} growing demand for higher data transmission rates and increasing diversity of services remain major challenges in future wireless communication systems \cite{jiang2021road}, \cite{ituFrameworkOverall}. To achieve high data rates while maintaining quality of service (QoS) across numerous wireless devices, massive multiple-input multiple-output (MIMO) architectures have been standardized as a key 5G technology and widely adopted. Among various massive MIMO architectures, network or distributed MIMO utilizes a spatially distributed set of antennas to cooperatively reduce inter-cell interference in traditional cellular systems, allowing for higher data rates \cite{choi2007downlink}. 
	
	An advanced form of Network MIMO, known as cell-free massive MIMO (CF-mMIMO), has garnered considerable attention in recent years \cite{ngo2017cell}. In CF-mMIMO systems, a large number of distributed access points (APs) are connected to a central processing unit (CPU) via backhaul links. This configuration supports collaborative communication, enabling multiple user equipments (UEs) to share the same time-frequency resources. The spatially distributed architecture of CF-mMIMO offers additional macro-diversity gains from dense AP deployment and improves fairness among UEs.
	
	In recent CF-mMIMO literature, the user-centric model has emerged as a highly promising approach. In this model, each UE is served by a subset of APs rather than by all APs \cite{bjornson2020scalable},\cite{lee2025user}. Configuring the AP subset in a user-centric CF-mMIMO network allows for significant scalability improvements, optimizing computational complexity with minimal performance loss compared to the original CF-mMIMO. In user-centric CF-mMIMO, the CPU processes signals from APs based on the configured AP-UE associations, constructing AP subsets as per the chosen association scheme. To maximize network sum spectral efficiency (SE), effective signal processing involves applying appropriate weights to the signals gathered from each AP, thereby enhancing the SE for each UE. This technique, known as large-scale fading decoding (LSFD) \cite{bjornson2019making}, is highly effective. While the user-centric approach reduces computational overhead, CF-mMIMO inherently encounters complex interference among UEs. Consequently, an efficient resource allocation strategy is essential for managing interference and ensuring fair service for a large user base\cite{hao2024joint, tuan2022scalable, you2020energy}. 
	
	As network density and the number of UEs continue to increase, energy efficiency (EE) has become as crucial as SE \cite{isheden2012framework, cho2013energy, zou2024energy}. While SE indicates the efficiency of data transmission, EE measures the efficient operation of the entire system. Thus, a primary goal in next-generation wireless communication systems is to maximize EE while maintaining adequate SE. The conventional global EE (GEE) is defined as the ratio of the total data transmission rate to the total power consumption, representing the amount of data transmitted per unit of energy used \cite{zappone2015energy}. In time-division duplex (TDD) wireless communication systems, EE can be evaluated separately for the uplink and downlink phases, each influenced by how the system's power consumption is calculated.
	
	Power consumption can be divided into two main components: network-side power \cite{3gpp38864}, which is used by network elements such as the AP and backhaul, and UE-side power \cite{3gpp38840}, which includes the power needed for transmission, reception, and circuit operations within the device. In the downlink, network-side power consumption is the primary consideration, as the network manages the AP’s transmit power to serve multiple UEs. Conversely, in the uplink, the data transmission rate relies on each UE's transmit power, making UE-side power consumption a critical factor.
	
	Previous studies on the EE of CF-mMIMO systems have mainly focused on downlink global energy efficiency (DL-GEE), defined as the ratio of the total downlink data transmission rate to total network power consumption \cite{papazafeiropoulos2021towards, alonzo2019energy, zheng2023energy, raghunath2024energy}. In contrast, there has been limited research addressing the uplink EE in CF-mMIMO systems. In \cite{chen2023energy}, large-scale fading processing (LSFP) methods were proposed to maximize the GEE for both uplink and downlink. However, these methods relied on heuristic algorithms for power allocation. The work in \cite{bashar2019energy} proposed a method to define uplink global energy efficiency (UL-GEE) when quantization occurs at the APs. It optimized the uplink transmit power using successive convex approximation (SCA) and sub-optimal geometric programming (GP). Another work \cite{choi2021energy} proposed a power control strategy aimed at maximizing max-min fairness in the energy efficiencies of individual UEs, rather than focusing on GEE. The latest study \cite{zhao2024towards} examined the quantization bits and UE transmit power required to optimize UL-GEE in uplink CF-mMIMO with low-resolution ADCs. Since these studies focused on GEE or EE for the UE with the worst performance, addressing the EEs of individual UEs separately remains challenging.
	
	Departing from the conventional GEE, our primary focus is on the EE of individual UEs, defined as the ratio of each UE's SE to its power consumption. Specifically, we utilize the weighted-sum energy efficiency (WSEE) as our key performance metric, incorporating weights that reflect each UE’s unique characteristics, such as power budget and data rate requirements. The WSEE metric is particularly suited for multi-user systems with diverse traffic requirements, as increasingly expected in IMT-2030, because it enables prioritization among UEs through appropriately chosen weights. 
	
	The proposed WSEE is defined as the weighted sum of the EE of individual UEs. This provides flexibility to reflect the unique requirements and priorities of each UE, which cannot be addressed in the overall power consumption related to the total transmission rate. For instance, if a specific UE temporarily has a service requirement with higher reliability or is limited in power usage, a larger weight may be given to that UE to preferentially improve the EE of that UE. Furthermore, it is possible to separate UEs into several groups and, by appropriately adjusting the weights, better serve each group according to its priority while its EE is more strictly managed. WSEE can effectively balance the two goals of optimizing the EE of the entire network and meeting the requirements for each UE. In addition, the proposed WSEE can control the power use of each UE particularly effectively, focusing on the power consumption on the UE-side, thereby extending battery life and contributing to reducing energy costs. This can enhance the reliability and lifetime of the link in important communication situations. The EE of each UE in the uplink WSEE is significantly affected by not only its own transmit power but also the power control of other UEs in the system. To address this, we propose an energy-efficient power control and distributed signal processing framework suitable for uplink user-centric CF-mMIMO systems.
	
	The main contributions of this paper are summarized as below.
	\begin{itemize}
		\item[$\bullet$] Our work is the first to analyze uplink WSEE in CF-mMIMO systems. We propose the new metric of uplink WSEE, which allows for comprehensive evaluations of uplink EE with various priorities of UEs. We also maximize the proposed WSEE by optimizing each UE’s uplink transmit power and the LSFD combining vector at the CPU, once all UEs’ weights and AP-UE associations are determined. 
		\item[$\bullet$] We present a fast optimization algorithm for the WSEE maximization problem. This is crucial as WSEE maximization leads to a non-convex fractional program that is NP-hard. To render it tractable, we employ the quadratic transform (QT), which is well suited for optimizing multiple-ratio fractional objectives. We derive an algorithm that applies QT sequentially to each UE’s EE and to the corresponding SINR. Compared with benchmark schems based on Dinkelbach transform, the proposed approach achieves substantially higher WSEE. Our QT-based alternating optimization (AO) solves, at each iteration, subproblems that are either concave or globally optimizable, and thus guarantees monotone convergence. This yields fast and dependable optimization suitable for practical network operation.
		\item[$\bullet$] Furthermore, we introduce a modified variant that incorporates auxiliary variables in the sequential QT procedure, preserving convergence while accelerating optimization. Both the original and modified methods are based on AO and have comparable per-iteration computational complexity, although their convergence speeds differ depending on the objective and constraints. The modified algorithm improves convergence speed by more than 30\% while incurring less than a 1\% loss in final WSEE performance.
		\item[$\bullet$]  We design the WSEE optimization problem to incorporate each UE's maximum power limit and minimum data rate requirements. Traditional GEE-focused approaches prioritize overall network efficiency, so there are limitations in ensuring the performance of individual UEs because the performance degradation of a particular UE may be offset by the performance improvement of another UE. Our proposed framework—leveraging the WSEE metric and a QT-based solution—enables efficient determination of system operation policies in heterogeneous systems where UEs with diverse characteristics coexist. Simulation results quantify this benefit. In scenarios with diverse UE requirements, the proposed framework improves the uplink WSEE by more than 20\% compared to the GEE-focused baseline, while achieving this gain with a 16\% reduction in runtime.
	\end{itemize}
	
	\subsection{Outline} 
	The remainder of this paper is organized as follows. Section~\ref{sec:system_model} describes the uplink user-centric CF-mMIMO system model. Section~\ref{sec:per_met} defines the uplink EE of individual UEs and the WSEE metric based on the uplink EE. Additionally, we present the weighted-global energy efficiency (WGEE) as a comparative metric for WSEE. In Section~\ref{sec:prob_alg}, we formulate the WGEE and WSEE maximization problems and solve the corresponding optimization problems. To achieve this, we develop an AO algorithm based on QT and an accelerated variant that employs auxiliary variables to mitigate nonlinearity. Section~\ref{sec:num_res} provides a comprehensive performance evaluation of the proposed algorithms under various weight conditions. Finally, Section~\ref{sec:conclusion} concludes the paper.
	
	\subsection{Notations} 
	For a scalar $x$, superscript $x^\dg$ denotes the complex conjugate. For a vector $\mathbf{a}$ and a matrix $\mathbf{A}$, superscript $(\cdot)^{\mathrm{H}}$ denotes the Hermitian. $\Vert \mathbf{a} \Vert$, $\mathrm{tr}(\mathbf{A})$ and $\mathrm{diag}(\mathbf{a})$ are used as the Euclidean norm of $\mathbf{a}$, the trace of $\mathbf{A}$, and the diagonal matrix whose diagonal elements are $\mathbf{a}$, respectively. $\mathbf{I}_N$ and $\mathbf{0}_N$ are identity and zero matrices. For a set $\mathcal{S}$, $\lvert \mathcal{S} \rvert$ is the cardinality of $\mathcal{S}$. The expression $\mathcal{S} \setminus \{s\}$ stands for the rest of the set except for the element $s$. Also, $\mathbb{E}\{\cdot\}$ means the expectation.
	
	\section{System Model\label{sec:system_model}}
	We consider a CF-mMIMO system with $M$ APs, each equipped with $N$ antennas. There are $K$ single antenna UEs, randomly distributed across a wide area.
	To benefit from the favorable propagation and channel hardening effects of CF-mMIMO, we assume that $MN \gg K$. We adopt a user-centric approach \cite{bjornson2020scalable}, where all APs are connected to the CPU via ideal backhaul, but each AP only serves a subset of UEs. The system model is illustrated in Fig. \ref{fig:system_model}. The subset of APs serving UE $k$ is denoted as $\mathcal{M}_k \subseteq \{1, \dots, M\}$, and the subset of UEs served by AP $m$ is denoted as $\mathcal{D}_m \subseteq \{1, \dots, K\}$. TDD is adopted to enable both uplink and downlink transmissions on the same frequency band and to allow for efficient reciprocity-based channel state information (CSI) acquisition. This analysis focuses on uplink channel estimation and uplink data transmission, excluding downlink transmission. We adopt a block fading channel model, where each coherence block has a length of $\tau_c$ (in samples). During each coherence interval, all UEs transmit pilot sequences in the first $\tau_p$ samples, while the remaining $\tau_c - \tau_p$ samples are allocated to uplink data transmission. After uplink channel estimation, the CPU uses the optimized parameters to control the uplink power of each UE. For simplicity, we assume that the UEs instantly know the optimized parameters, meaning time required for downlink transmission of these parameters is negligible.
	\begin{figure}[!t]
		\centering
		\includegraphics[width=0.5\columnwidth]{./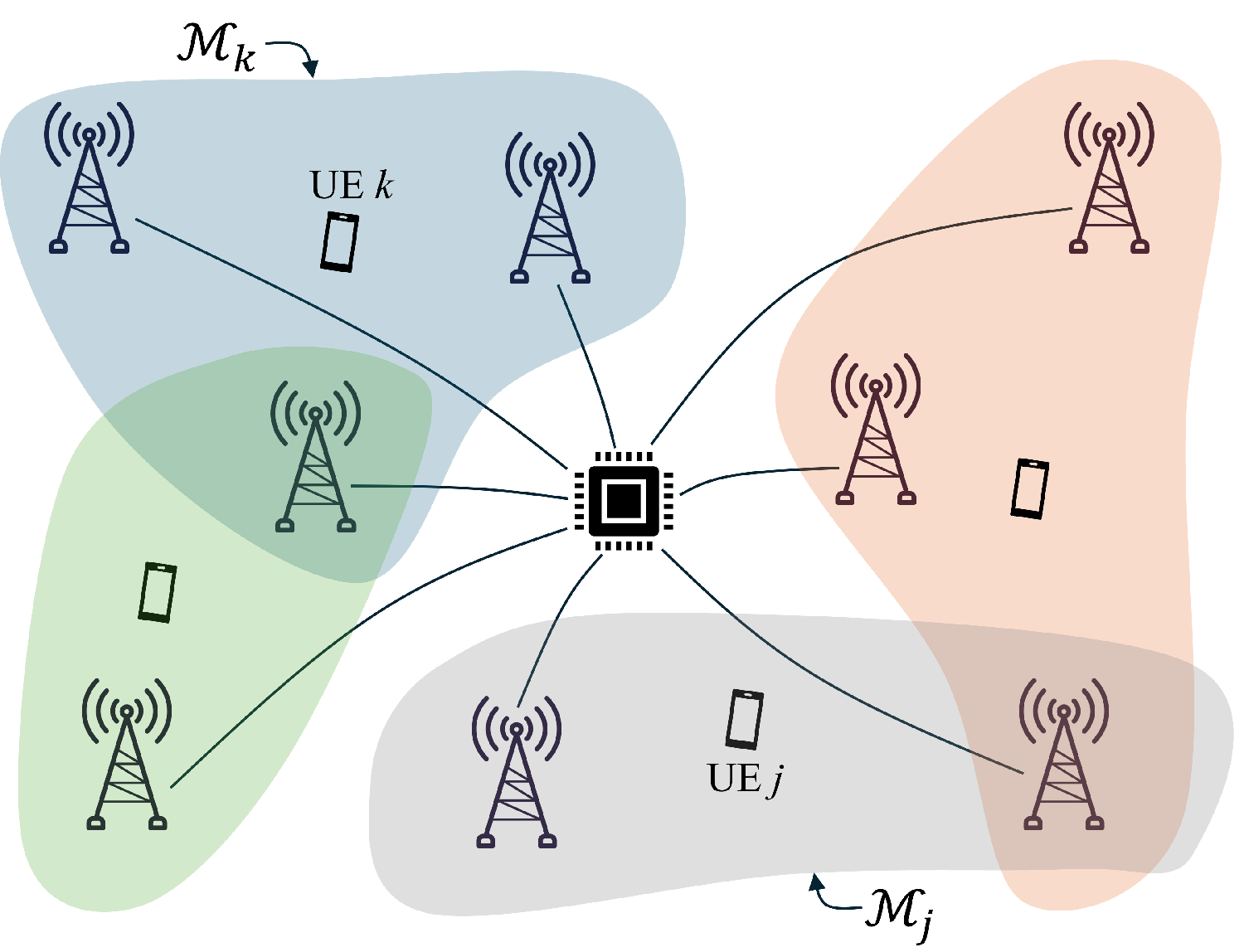}
		\caption{Illustration of a user-centric CF-mMIMO system}
		\label{fig:system_model}
	\end{figure}
	We consider a Rayleigh fading vector channel model, where the channel coefficient vector $\mathbf{g}_{mk} \in \mathbb{C}^{N \times 1}$ between AP $m$ and UE $k$ follows $\mathbf{g}_{mk} \sim \mathcal{CN}(\mathbf{0}, \beta_{mk}\mathbf{I}_N)$. Here, $\beta_{mk}$ represents the large-scale fading coefficient determined by the distance and propagation environment.
	The value of $\beta_{mk}$, representing path loss and shadowing, varies slowly compared to the small-scale fading coefficients. It is assumed that the large-scale fading coefficients $\beta_{mk}$ are known to APs $m \in \{1, \cdots, M\}$. 
	
	\subsection{AP-UE Association}
	Each UE can select its serving AP cluster $\mathcal{M}_k$ based on the large-scale fading information available to it. We use the largest-large-scale-fading-based selection method \cite{papazafeiropoulos2021towards}, where UE $k$ selects the top $\lvert \mathcal{M}_k \rvert$ APs with the best channel conditions. Specifically, the APs in $\mathcal{M}_k$ satisfy the following condition,
	\begin{align} \label{eqn:association}
		\sum_{m=1}^{\lvert \mathcal{M}_k \rvert} \frac{\bar{\beta}_{mk}}{\sum_{m'=1}^{M} \beta_{m'k}} \geq \delta,
	\end{align}
	where ${\bar{\beta}_{1k}, \dots, \bar{\beta}_{Mk}}$ are the sorted large-scale fading coefficients in descending order, and $\delta$ is a parameter between 0 and 1 that indirectly controls the size of the AP cluster serving each UE. Based on the determined association $\mathcal{M}_k$, we define the block-diagonal matrix $\mathbf{D}_k = \mathrm{diag}(\mathbf{D}_{1k}, \dots, \mathbf{D}_{Mk})$, where $\mathbf{D}_{mk} = \mathbf{I}_N$ if $m \in \mathcal{M}_k$, and $\mathbf{D}_{mk} = \mathbf{0}_N$ otherwise.
	
	\subsection{Uplink Channel Estimation}
	During the channel estimation phase, all UEs transmit their pilot sequences, and each AP $m$ locally estimates the channels of UEs in its subset $\mathcal{D}_m$. UE $k$ uses the orthogonal pilot sequence matrix $\mathbf{\Phi} \in \mathbb{C}^{\tau_p \times K}$, where the column $\mathbf{\phi}_k$ is the pilot sequence for UE $k$. If $K \leq \tau_p$, all UEs can use mutually orthogonal pilot sequences. However, in general, $\tau_p \ll K$, leading to pilot contamination. In this scenario, let $\mathcal{P}_k$ represent the set of UEs sharing the same pilot sequence as UE $k$. The signal received by AP $m$ is given by
	\begin{align}\label{eqn:ypilot}
		\mathbf{y}_m^{\mathrm{pilot}} = \sqrt{\tau_p \rho_p} \sum_{k=1}^{K} \mathbf{g}_{mk} \mathbf{\phi}_k^{\mathrm{H}} + \mathbf{n}_m^{\mathrm{pilot}},
	\end{align}
	where $\rho_p$ represents the uplink transmit signal-to-noise ratio (SNR) of the pilot symbols, normalized by the noise power $N_0 = \sigma_{\mathrm{ul}}^2$. The term $\mathbf{n}_m^{\mathrm{pilot}}$ denotes the received noise, with each element independently and identically distributed (i.i.d.) as $\mathcal{CN}(0, 1)$. The minimum mean-squared error (MMSE) estimate of $\mathbf{g}_{mk}$ is given by
	\begin{align}\label{eqn:ghat}
		\mathbf{\hat{g}}_{mk} = \frac{\sqrt{\tau_p \rho_p} \beta_{mk}}{\tau_p \rho_p \sum_{k' \in \mathcal{D}_m} \beta_{mk'} {\lvert\mathbf{\phi}_{k'}^{\mathrm{H}} \mathbf{\phi}_k\rvert}^2 + 1} \mathbf{\tilde{y}}_m^{\mathrm{pilot}},
	\end{align} 
	where
	\begin{align}\label{eqn:ypilottilde}
		\mathbf{\tilde{y}}_m^{\mathrm{pilot}} = \mathbf{y}_m^{\mathrm{pilot}} \mathbf{\phi}_k = \sqrt{\tau_p \rho_p} \mathbf{g}_{mk} + \sqrt{\tau_p \rho_p} \sum_{k' \neq k} \mathbf{g}_{mk'} \mathbf{\phi}_{k'}^{\mathrm{H}} \mathbf{\phi}_k + \mathbf{n}_m^{\mathrm{pilot}} \mathbf{\phi}_k.
	\end{align}
	From the i.i.d. property of small-scale fading and noise, the channel estimate $\mathbf{\hat{g}}_{mk}$ also consists of $N$ i.i.d. Gaussian components.
	For convenience, we define $\gamma_{mk}$ as the mean-square value of a single component of $\mathbf{\hat{g}}_{mk}$, which is given by
	\begin{align}\label{eqn:gamma}
		\gamma_{mk} = \frac{\tau_p \rho_p \beta_{mk}^2}{\tau_p \rho_p \sum_{k' \in \mathcal{P}_k} \beta_{mk'} {\lvert\mathbf{\phi}_{k'}^{\mathrm{H}} \mathbf{\phi}_k\rvert}^2 + 1}.
	\end{align}
	
	\subsection{Uplink Data Transmission}
	In the distributed operation of user-centric CF-mMIMO, signal processing is divided between the AP and the CPU \cite{bjornson2019making}. During the uplink transmission phase, all UEs transmit signals to the APs. The signal received by AP $m$ is given by
	\begin{align}\label{eqn:yul}
		\mathbf{y}_m^{\mathrm{ul}} &= \sum_{k=1}^K \mathbf{g}_{mk} x_k + \mathbf{n}_m,
	\end{align}
	where the transmitted signal $x_k = \sqrt{q_k} s_k$ has a power $q_k$, with $s_k$ being the normalized transmit signal such that $\mathbb{E}\{|s_k|^2\} = 1$. Each AP $m$ uses a local maximum ratio combining (MRC) vector $\mathbf{v}_{mk} = \mathbf{D}_{mk} \mathbf{\hat{g}}_{mk}$, the natural counterpart to conjugate beamforming used in downlink CF-mMIMO literature \cite{bashar2019energy}, \cite{demir2021foundations}. The local estimate of the signal $x_k$ transmitted by UE $k$ at AP $m$ is given by
	\begin{align}\label{eqn:xmkhat}
		\hat{x}_{mk} &= \mathbf{v}_{mk}^{\mathrm{H}} \mathbf{D}_{mk} \mathbf{y}_m^{\mathrm{ul}} = \hat{\mathbf{g}}_{mk}^{\mathrm{H}} \mathbf{D}_{mk} \mathbf{y}_m^{\mathrm{ul}}.
	\end{align}
	After local data estimation at the APs, the local estimates are forwarded to the CPU via backhaul links. The CPU then completes the final data decoding for each UE as follows:
	\begin{align}\label{eqn:xkhat2}
		\hat{x}_k = \sum_{m=1}^{M} u_{mk}^\dg \hat{x}_{mk} = \sum_{m=1}^{M} u_{mk}^\dg  \hat{\mathbf{g}}_{mk}^{\mathrm{H}} \mathbf{D}_{mk} \mathbf{y}_m^{\mathrm{ul}},
	\end{align}
	where $u_{mk} \in \mathbb{C}$ is the weight applied by the CPU to the local estimate $\hat{x}_{mk}$. This process is known as large-scale fading decoding (LSFD). The CPU applies the LSFD matrix $\mathbf{U} \in \mathbb{C}^{M \times K}$, where the $k^{th}$ column $\mathbf{u}_k \in \mathbb{C}^{M \times 1}$ represents the LSFD combining vector for UE $k$. 
	
	Substituting (\ref{eqn:yul}) into the above expression and performing some simple mathematical manipulations, the final estimated result $\hat{x}_k$ can be decomposed into the terms of desired signal $\mathsf{DS}_k$, beamforming uncertainty gain $\mathsf{BU}_k$, inter-user interference $\mathsf{IUI}_{kk'}$, and effective noise $\mathsf{EN}_k$, respectively, as
	\begin{align}
		\hat{x}_k &= \underbrace{\mathbb{E}\left[\sum_{m=1}^M u_{mk}^\dg \hat{\mathbf{g}}_{mk}^{\mathrm{H}} \mathbf{D}_{mk} \mathbf{g}_{mk}\right] x_k}_{\textrm{($\mathsf{DS}_k$, desired signal)}} + \underbrace{\left(\sum_{m=1}^M u_{mk}^\dg \!\left( \hat{\mathbf{g}}_{mk}^{\mathrm{H}} \mathbf{D}_{mk} \mathbf{g}_{mk} \!-\! \mathbb{E}\left\{\hat{\mathbf{g}}_{mk}^{\mathrm{H}} \mathbf{D}_{mk} \mathbf{g}_{mk}\right\}\right) \!\right) x_k}_{\textrm{($\mathsf{BU}_k$, beamforming uncertainty)}} \nonumber \\
		&\quad + \sum_{k' \neq k} \underbrace{\left( \sum_{m=1}^M u_{mk}^\dg \hat{\mathbf{g}}_{mk}^{\mathrm{H}} \mathbf{D}_{mk} \mathbf{g}_{mk'} \right) x_{k'}}_{\textrm{($\mathsf{IUI}_{kk'}$, inter-user interference)}} + \underbrace{\sum_{m=1}^M u_{mk}^\dg \hat{\mathbf{g}}_{mk}^{\mathrm{H}} \mathbf{D}_{mk} \mathbf{n}_m}_{\textrm{($\mathsf{EN}_k$, effective noise)}}.
	\end{align}
	Now, based on the previous discussion, the SE of UE $k$ can be described in closed form, as detailed in the following lemma.
	\begin{lemma}{\cite[Chapter 5]{demir2021foundations}}\label{lemma:sinr}
		In a user-centric CF-mMIMO system with local MRC and LSFD combiner, the uplink SE of UE $k$ can be expressed as 
		\begin{align}\label{eqn:SE}
			\mathrm{SE}_k^{\mathrm{ul}} = \frac{\tau_c - \tau_p}{\tau_c} \log_2 \left( 1 + \mathsf{SINR}_k^{\mathrm{ul}} \right),
		\end{align}
		and $\mathsf{SINR}_k^{\mathrm{ul}}$ is given by
		\begin{align}\label{eqn:SINR_matrix}
			\mathsf{SINR}_k^{\mathrm{ul}} &= \frac{q_k \left|\mathbf{u}_k^{\mathrm{H}} \mathbb{E}\{\hat{\mathbf{g}}_{mk}^{\mathrm{H}} \mathbf{D}_{mk} \mathbf{g}_{mk}\}\right|^2}{\mathbf{u}_k^{\mathrm{H}} \! \left( \! \sum\limits_{k' \neq k} q_{k'} \mathbb{E} \! \left\{ \! \left|\hat{\mathbf{g}}_{mk}^{\mathrm{H}} \mathbf{D}_{mk} \mathbf{g}_{mk'}\right|^2 \! \right\} \! + \! \sigma_{\mathrm{ul}}^2 \mathrm{diag} \! \left( \! \mathbb{E} \! \left\{ \! \left|\hat{\mathbf{g}}_{1k}^{\mathrm{H}} \mathbf{D}_{1k}\right|^2 \! \right\} \!, \! \cdots \! , \mathbb{E} \! \left\{ \! \left|\hat{\mathbf{g}}_{Mk}^{\mathrm{H}} \mathbf{D}_{Mk}\right|^2 \! \right\} \! \right) \! \right) \! \mathbf{u}_k}\\
			&=
			\frac{q_k\left\lvert\sum_{m \in \mathcal{M}_k} u_{m k}^\dg \gamma_{m k}\right\rvert^2}{\sum_{k^{\prime}} q_{k^{\prime}} \sum_{m \in \mathcal{M}_k} \mathfrak{A}_{mk} + \frac{1}{N} \sum_{k^{\prime} \in \mathcal{P}_k \backslash \{k\} } q_{k^{\prime}} \mathfrak{B}_{mk} + \frac{1}{N^2} \sigma_{\mathrm{ul}}^2 \sum_{m \in \mathcal{M}_k} \mathfrak{C}_{mk}},
		\end{align}
		where $\mathfrak{A}_{mk} = \left\lvert u_{m k}\right\rvert^2 \beta_{m k^{\prime}} \gamma_{m k}$, $\mathfrak{B}_{mk} = \left\lvert\sum_{m \in \mathcal{M}_k} u_{m k}^\dg \gamma_{m k} \sqrt{\frac{\rho_{k^{\prime}}}{\rho_k}} \frac{\beta_{m k^{\prime}}}{\beta_{m k}}\right\rvert^2$, and $\mathfrak{C}_{mk} = \left\lvert u_{m k}\right\rvert^2 \gamma_{m k}$.
	\end{lemma}
	
	Note that both the numerator and the denominator of the $\mathsf{SINR}_{k}^{\text{ul}}$ are affine functions with respect to $\mathbf{q}$, which denotes the vector of transmit powers of all UEs.
	In Lemma \ref{lemma:sinr}, given the channel statistics, the SINR of UE $k$ depends on both $\mathbf{u}_k$ and $\mathbf{q}$. When $\mathbf{q}$ is held constant, the SINR reduces to a generalized Rayleigh quotient with respect to $\mathbf{u}_k$. It is also well known that $\mathbf{u}_k$, which maximizes this form of SINR, can be derived in closed form \cite{ghojogh2019eigenvalue}. For our system model, the optimal $\mathbf{u}_k$ is computed as
	\begin{align}\label{eqn:opt_lsfd}
		\mathbf{u}_k = q_k\mathbf{\Gamma}^{-1}\mathbb{E}\{\hat{\mathbf{g}}_{mk}^{\mathrm{H}} \mathbf{D}_{mk} \mathbf{g}_{mk}\},
	\end{align}
	where
	\begin{equation}
		\begin{aligned}
			\mathbf{\Gamma} = \sum_{k' \neq k} q_{k'} \mathbb{E}\left\{\left|\hat{\mathbf{g}}_{mk}^{\mathrm{H}} \mathbf{D}_{mk} \mathbf{g}_{mk'}\right|^2\right\} + \sigma_{\text{ul}}^2 \mathrm{diag}\left(\mathbb{E}\left\{\left|\hat{\mathbf{g}}_{1k}^{\mathrm{H}} \mathbf{D}_{1k}\right|^2\right\}, \dots, \mathbb{E}\left\{\left|\hat{\mathbf{g}}_{Mk}^{\mathrm{H}} \mathbf{D}_{Mk}\right|^2\right\}\right) + \widetilde{\mathbf{D}}_k
		\end{aligned}
	\end{equation} where $\widetilde{\mathbf{D}}_k$ is the matrix obtained by replacing each block of $\mathbf{D}_k$ with its complement matrix.
	
	\section{Performance Metrics\label{sec:per_met}}
	
	\subsection{Weighted-Global Energy Efficiency}
	A straightforward energy-efficiency metric that incorporates per-UE weights is the weighted-global energy efficiency (WGEE), defined as the weighted sum rate (WSR) divided by the total power consumption. WGEE is given by 
	\begin{align}\label{eqn:wgee}
		\mathsf{WGEE} = \frac{\sum_{k=1}^{K} w_k \cdot \mathrm{SE}_k}{\sum_{k=1}^{K} P_k} = B \left(1-\frac{\tau_p}{\tau_c}\right) \cdot \frac{\sum_{k=1}^{K} w_k \log_2 \left( 1 + \mathsf{SINR}_k^{\mathrm{ul}} \right)}{\sum_{k=1}^{K} P_k},
	\end{align}
	where $B$ is the bandwidth, and $P_k$ represents the power consumed by UE $k$ to achieve the corresponding $\mathrm{SE}_k^{\mathrm{ul}}$. This formulation presents WGEE as a single-ratio measure, similar to the conventional GEE. An appropriate weight vector $\mathbf{w}$ can be selected based on the specific characteristics and requirements of both the UEs and the network.

	\subsection{Energy Efficiency of Each User}
	We first define the EE of individual UEs as \cite{bjornson2017massive}:
	\begin{align}\label{eqn:EE_k}
		\mathrm{EE}_k = \frac{B \cdot \mathrm{SE}_k^{\mathrm{ul}}}{P_k},
	\end{align}
	In typical DL-GEE analysis, the total network power consumption accounts for the power consumed by distributed APs and backhauls\cite{ngo2017total}. In contrast, in this UL study, we focus solely on UE-side power consumption when defining the uplink WSEE. The power drawn by the APs, backhaul, and CPU is not included in the considered metric, as we are primarily interested in UE-side EE. 
	
	Then, the power consumption model for UE $k$ can be expressed as
	\begin{align}\label{eqn:power_k}
		P_k = P_{\mathrm{UE}_k} = P_{\mathrm{TX}, k} + P_{\mathrm{CP}, k},
	\end{align}
	where $P_{\textrm{TX}, k}$ is the uplink transmit power of the UE, and $P_{\textrm{CP}, k}$ is the circuit power consumed by the UE's hardware. The transmit signal power $P_{\textrm{TX}, k}$ after power amplifier (PA), proportional to the transmit signal power $q_k$, is given by
	\begin{align}\label{eqn:transmit_power}
		P_{\mathrm{TX}, k} = \frac{1}{\zeta_k} q_k,
	\end{align}
	where $\zeta_k$ is the PA efficiency of UE $k$, and $q_k$ is a value having an upper limit of the maximum transmission signal power $p_{\mathrm{max}, k}$. $P_{\mathrm{CP}, k}$ is the power consumed to drive the circuit elements of UE $k$, including cooling, power supply, control signaling, etc. This value is determined by the UE hardware configuration, but may be regarded as a fixed value in an uplink transmission. 
	\begin{remark}
		\textit{The metric ${EE_k}$, defined by (\ref{eqn:EE_k}) $-$ (\ref{eqn:transmit_power}), represents the amount of power $P_k$ that must be consumed by UE $k$ in a user-centric CF-mMIMO system to achieve the desired uplink data rate $B \cdot SE_k^{ul}$. Under the assumptions made in this work, among the factors influencing $P_k$, the only parameter that a UE can directly adjust is its transmit power\footnote{Small-scale LMMSE channel estimation is carried out locally at each AP during the pilot phase; thus no backhaul is involved within the coherence block. The LSFD vectors, computed from large-scale statistics, are pre-shared and updated only when large-scale conditions change. Therefore, slot-level backhaul power is absent or negligible.}. In the case of $SE_k$, it is determined by both the transmit power of each UE and the LSFD combiner at the CPU. The network is responsible for jointly optimizing the transmit power $\mathbf{q}$ and LSFD combiner $\mathbf{U}$. Based on the AP-UE association and the channel information obtained from the APs, the CPU can determine $\mathbf{q}$ and $\mathbf{U}$. Subsequently, downlink control signals can be used to instruct each UE on its transmit power. The UEs then perform uplink data transmission using the power levels received from the network. This operation is ensured by the performance guarantees provided through the TDD block fading channel model.} 
	\end{remark}
	
	\subsection{Weighted-Sum Energy Efficiency }
	We introduce the WSEE as an uplink performance metric, which captures the EE of individual UEs in the previous subsection. The WSEE is given by
	\begin{align}\label{eqn:wsee}
		\mathsf{WSEE} = \sum_{k=1}^{K} w_k \cdot \mathrm{EE}_k = B \left(1-\frac{\tau_p}{\tau_c}\right) \cdot \sum_{k=1}^{K} w_k \cdot \frac{\log_2 \left( 1 + \mathsf{SINR}_k^{\mathrm{ul}} \right)}{P_k},
	\end{align}
	where $w_k$ represents the weight assigned to UE $k$.  In line with WGEE, the weight vector $\mathbf{w}$ can be adjusted to suit the unique characteristics of the UEs and the specific requirements of the network. This flexibility enables the WSEE to be tailored and optimized for a variety of scenarios. As will be shown later in Section \ref{sec:num_res}, this metric allows for comprehensive evaluations of uplink EE under various network scenarios.
	Even when WSEE and WGEE use the same $\mathbf{w}$, the resulting values are not directly comparable. Instead, the EE of individual UEs should be recalculated based on $\mathbf{q}$ and $\mathbf{U}$, which maximize WSEE and WGEE, respectively. The network's performance can then be assessed by comparing these values.

	\section{ Energy Efficiency Maximization \label{sec:prob_alg}}
	\begin{algorithm}[!t]
		\caption{Dinkelbach algorithm for WGEE maximization}
		\label{alg:wgee}
		\begin{algorithmic}[1]
			\State \textbf{Input:} $\mathbf{w}$, tolerance $\varepsilon$ and $\textit{max\_iter}$
			\State \textbf{Initialize} \\
			\quad 1) $\mathbf{q}^{(0)}$ and $\mathbf{U}^{(0)}$ to feasible values \\
			\quad 2) Set $i=0$, $\mathsf{WGEE}^{(0)}=0$
			\Repeat
			\State Update $\mathbf{U}^{(i+1)} \leftarrow \mathbf{U}^{(i)}$ for $\mathbf{q}^{(i)}$
			\State Update $\mathbf{v}$ by (\ref{eqn:vupdate})
			\State Solve $P_{\mathsf{WGEE}}^{\text{Dink}}$ and update $\mathbf{q}^{(i+1)} \leftarrow \mathbf{q}^{(i)}$
			\State Calculate $\mathsf{WGEE}$ \textit{w.r.t} $\mathbf{w}$, $\mathbf{q}$ and $\mathbf{U}$
			\State $i \leftarrow i+1$
			\Until{$\lvert\mathsf{WGEE}^{(i+1)} - \mathsf{WGEE}^{(i)}\rvert < \varepsilon$ or $i = \textit{max\_iter}$}
			\State \textbf{Output:} last updated $\mathbf{q}$, $\mathbf{U}$ and $\mathsf{WGEE}$
		\end{algorithmic}
	\end{algorithm}
	
	\subsection{WGEE Maximization}
	WGEE can be maximized by jointly optimizing $\mathbf{q}$ and $\mathbf{U}$, given each UE's maximum transmit power, data rate requirement, and corresponding weight. The maximization problem of WGEE is as follows.
	\begin{subequations} \label{P-WGEE}
		\begin{align}
			P_{\mathsf{WGEE}} : \underset{\mathbf{q}, \mathbf{U}}{\text{maximize}} \quad & \mathsf{WGEE}(\mathbf{w}, \mathbf{q}, \mathbf{U}) \label{P-WGEEa} \\
			\text{subject to} \quad &
			\Vert \mathbf{u}_k \Vert = 1, \quad \forall k, \label{P-WGEEb} \\
			& \text{SE}_k(\mathbf{q}, \mathbf{u}_k) \ge r_k, \quad \forall k, \label{P-WGEEc} \\
			\qquad\qquad\quad\!\! \ & 0 \leq q_k \leq p_{\text{max}, k}, \quad \forall k. \label{P-WGEEd}
		\end{align}
	\end{subequations}
	$P_{\mathsf{WGEE}}$ can be reformulated by applying the standard Dinkelbach transform \cite{dinkelbach1967nonlinear}, 
	\begin{subequations} \label{P_Dink}
		\begin{align}
			P_{\mathsf{WGEE}}^{\text{Dink}} : \underset{\mathbf{q}, \mathbf{U}, \mathbf{v}}{\text{maximize}} \quad & \sum_{k=1}^{K} w_k \mathrm{SE}_k - v_k \sum_{k=1}^{K} P_k \label{P_Dinka} \\
			\text{subject to} \quad &
			(\ref{P-WGEEb}) - (\ref{P-WGEEd}), \label{PDinkb} \\
			& \mathbf{v} \in \mathbb{C}^K, \label{PDinkc}
		\end{align}
	\end{subequations}
	where the element $v_k$ of the auxiliary vector $\mathbf{v} = [ v_1, v_2, ..., v_K ]$ is updated through
	\begin{align}\label{eqn:vupdate}
		v_k^* = \frac{\sum_{k=1}^{K} w_k \mathrm{SE}_k \left( \mathbf{q}, \mathbf{U} \right) }{\sum_{k=1}^{K} P_k\left( \mathbf{q} \right)}, \quad \forall k.
	\end{align}
	Note that solving the problem  $P_{\mathsf{WGEE}}^{\text{Dink}}$ directly is challenging since the objective is non-concave jointly with respect to $\mathbf{q}$ and $\mathbf{U}$. To handle this, we employ an AO approach, which separates the original problem into two subproblems that alternately optimize $\mathbf{q}$ and $\mathbf{U}$. In the $\mathbf{U}$-subproblem, with $\mathbf{q}$ fixed, the optimal $\mathbf{U}$ is determined using (\ref{eqn:opt_lsfd}). In the $\mathbf{q}$-subproblem, with $\mathbf{U}$ fixed, the optimization problem for $\mathbf{q}$ is solved to maximize the WGEE. The iterative algorithm to solve $P_{\mathsf{WGEE}}^{\text{Dink}}$ is presented in Algorithm \ref{alg:wgee}.

	Referring to (\ref{P_Dinka}), it becomes evident that optimizing WGEE is closely associated with maximizing WSR. Consequently, in the optimization of $P_{\mathsf{WGEE}}^{\text{Dink}}$, alternative approaches, such as the weighted minimum mean square error (WMMSE) method \cite{feng2021weighted}, commonly employed for WSR maximization, or the Lagrangian dual form of QT \cite{shen2018fractional2}, could be considered instead of Algorithm \ref{alg:wgee}. However, directly applying WMMSE or the Lagrangian dual form of QT to our problem poses challenges, as these methods typically do not account for QoS constraints. In Section \ref{sec:num_res}, we use these methods as benchmarks for optimization algorithms in scenarios that do not involve QoS considerations.

	\subsection{WSEE Maximization}
	The objective of this study is to maximize the WSEE by determining the optimal transmit power $\mathbf{q}$ and LSFD combiner $\mathbf{U}$ for each UE, while accounting for maximum transmit power, data rate requirements, and associated weights. This can be formulated as follows.
	\begin{subequations} \label{P1}
		\begin{align}
			P_{\mathsf{WSEE}} : \underset{\mathbf{q}, \mathbf{U}}{\text{maximize}} \quad & \mathsf{WSEE}(\mathbf{w}, \mathbf{q}, \mathbf{U}) \label{P1a} \\
			\text{subject to} \quad &
			\Vert \mathbf{u}_k \Vert = 1, \quad \forall k, \label{P1b} \\
			& \text{SE}_k(\mathbf{q}, \mathbf{u}_k) \ge r_k, \quad \forall k, \label{P1c} \\
			\qquad\qquad\quad\!\! \ & 0 \leq q_k \leq p_{\text{max}, k}, \quad \forall k. \label{P1d}
		\end{align}
	\end{subequations}
	As with $P_{\mathsf{WGEE}}$, $P_{\mathsf{WSEE}}$ is highly non-concave jointly with respect to $\mathbf{q}$ and $\mathbf{U}$. Therefore, we develop an another algorithm based on AO. However, the $\mathbf{q}$-subproblem is still challenging under AO because WSEE is inherently a sum of fractions.
	
	For multiple-ratio objects such as WSEE in $P_{\mathsf{WGEE}}$, Dinkelbach-like transform can be applied by individually transforming each fraction with the Dinkelbach method \cite{rodenas1999extensions}.
	However, applying the Dinkelbach transform individually to each ratio in WSEE, where the power of one UE affects the EE of all UE, can significantly degrade the expected performance. Also, it is known that the Dinkelbach-like transform generally fails to converge even for the simplest forms of multiple-ratio problems.
	To deal with these problems, QT has been proposed to ensure that the transformed problem retains the same objective value as the original \cite{shen2018fractional}. In addition, QT ensures the equivalence to the optimal solution of the original problem.
	
	\begin{algorithm}[!t]
		\caption{Nested-QT based algorithm for WSEE maximization}
		\label{alg:QT2}
		\begin{algorithmic}[1]
			\State \textbf{Input:} $\mathbf{w}$, outer tolerance $\varepsilon_{\text{out}}$ and $\textit{max\_iter}$
			\State \textbf{Initialize} \\
			\quad 1) $\mathbf{q}^{(0)}$ and $\mathbf{U}^{(0)}$ to feasible values \\
			\quad 2) Set $i=0$, $\mathsf{WSEE}^{(0)}=0$
			\Repeat
			\State Solve $\mathbf{U}$-subproblem: $\mathbf{U}^{(i+1)} \leftarrow \mathbf{U}^{(i)}$ for fixed $\mathbf{q}^{(i)}$
			\State Update $\mathbf{z}$ by (\ref{eqn:yupdate})
			\State Update $\mathbf{y}$ by (\ref{eqn:zupdate})
			\State  Solve $\mathbf{q}$-subproblem: $\mathbf{q}^{(i+1)} \leftarrow \mathbf{q}^{(i)}$ for fixed $\mathbf{U}^{(i+1)}$
			\State Return $\mathsf{WSEE}^{(i+1)}$ \textit{w.r.t} $\mathbf{w}$, $\mathbf{q}^{(i+1)}$ and $\mathbf{U}^{(i+1)}$
			\State $i \leftarrow i+1$
			\Until  $\lvert\mathsf{WSEE}^{(i+1)} - \mathsf{WSEE}^{(i)}\rvert < \varepsilon_{\text{out}}$ or $i = \textit{max\_iter}$
			\State \textbf{Output:} last updated $\mathbf{q}$, $\mathbf{U}$ and $\mathsf{WSEE}$
		\end{algorithmic}
	\end{algorithm}
	\begin{algorithm}[!t]
		\caption{Modified $\mathbf{q}$-subproblem}
		\label{alg:qsub}
		\begin{algorithmic}[1]
			\State \textbf{Input:} $\mathbf{y}^{(i+1)}$, $\mathbf{z}^{(i+1)}$, $\mathbf{q}^{(i)}$, inner tolerance $\varepsilon_{in}$, and $\textit{max\_iter}$
			\State \textbf{Initialize} $j = 0$, $\mathbf{q}_{(0)} = \mathbf{q}^{(i)}$
			\Repeat
			\State Calculate $\widehat{\mathsf{SINR}}_k$ by (\ref{PQT3d})
			\State Solve $P_{\mathsf{WSEE}}^{\mathbf{q}\text{-sub}}$ for fixed $\mathbf{z}^{(i+1)}, \mathbf{y}^{(i+1)}$ and  $\widehat{\mathsf{SINR}}_k$
			\State Return $\mathbf{q}_{(j)}$
			\State $j \leftarrow j+1$  
			\Until ${\lVert\mathbf{q}_{(j+1)} - \mathbf{q}_{(j)}\rVert}^2 < \varepsilon_{in}$ or $j = \textit{max\_iter}$
			\State \textbf{Output:} last updated $\mathbf{q}$
		\end{algorithmic}
	\end{algorithm}
	
	In the case of WSEE maximization, every $\mathrm{EE}_k$ is in the form of a single ratio. Also, $\mathrm{SE}_k$ is in the form of $\log_2(1+\mathsf{SINR}_k^{\mathrm{ul}})$, where $\mathsf{SINR}_k^{\mathrm{ul}}$ itself is a single ratio. 
	It is known that QT preserves optimality for multiple-ratio FP problems whose objective is a monotonically increasing function of the individual ratios. Under this condition, the optimal solution to the transformed problem coincides with that of the original problem.
	Since both weighted summation with non-negative weights and $\log_2(1+x)$ are monotonically increasing functions, successive applications of QT to $\mathrm{EE}_k$ and $\mathsf{SINR}_k^{\mathrm{ul}}$ maintain equivalence with the original problem. First, we can reformulate problem  $P_{\mathsf{WSEE}}$ using QT for each $\mathrm{EE}_k$ as follows.
	\begin{subequations} \label{P_QT}
		\begin{align}
			P_{\mathsf{WSEE}}^{\text{QT}} : \underset{\mathbf{q}, \mathbf{U}, \mathbf{y}}{\text{maximize}} \quad & 
			\sum_{k=1}^{K} w_k \cdot \mathcal{F}_k (\mathbf{q}, \mathbf{U}, \mathbf{y}) \label{PQTa} \\
			\text{subject to} \quad &
			(\ref{P1b}) - (\ref{P1d}), \label{PQTb} \\
			& \mathbf{y} \in \mathbb{C}^K. \label{PQTc}
		\end{align}
	\end{subequations}
	The function $\mathcal{F}_k (\mathbf{q}, \mathbf{U}, \mathbf{y})$ in the objective of $P_{\mathsf{WSEE}}^{\text{QT}}$ is derived by applying the QT to $\mathrm{EE}_k$. It is expressed as
	\begin{align}\label{eqn:F}
		\mathcal{F}_k (\mathbf{q}, \mathbf{U}, \mathbf{y}) = 2y_k \left( \log_2 \left( 1 + \mathsf{SINR}_k^{\mathrm{ul}}\left( \mathbf{q}, \mathbf{U} \right) \right) \right)^{\frac{1}{2}} - y_k^2 P_k\left( \mathbf{q} \right).
	\end{align}
	The element $y_k$ of the auxiliary vector $\mathbf{y} = [ y_1, y_2, ..., y_K ]$ is used for optimizing $\mathrm{EE}_k$ and is iteratively updated for fixed $\mathbf{q}$ and $\mathbf{U}$, as shown below.
	\begin{align}\label{eqn:yupdate}
		y_k^* = \frac{\sqrt{\log_2 \left( 1 + \mathsf{SINR}_k^{\mathrm{ul}}\left( \mathbf{q}, \mathbf{U} \right) \right)}}{P_k \left( \mathbf{q} \right)}, \quad \forall k.
	\end{align}
	Next, by applying QT once more to the $\mathsf{SINR}_k^{\mathrm{ul}}$ of $\mathcal{F}_k (\mathbf{q}, \mathbf{U}, \mathbf{y})$, we can finally transform the original WSEE maximization problem into a tractable form as follows. 
	\begin{subequations} \label{P_QT2}
		\begin{align}
			P_{\mathsf{WSEE}}^{\text{nested-QT}} : \underset{\mathbf{q}, \mathbf{U}, \mathbf{y}, \mathbf{z}}{\text{maximize}} \quad & 
			\sum_{k=1}^{K} w_k \cdot \mathcal{G}_k (\mathbf{q}, \mathbf{U}, \mathbf{y}, \mathbf{z}) \label{PQT2a} \\
			\text{subject to} \quad &
			(\ref{PQTb}), (\ref{PQTc}), \label{PQT2b} \\
			& \mathbf{z} \in \mathbb{C}^K. \label{PQT2c}
		\end{align}
	\end{subequations}
	The function $\mathcal{G}_k (\mathbf{q}, \mathbf{U}, \mathbf{y}, \mathbf{z})$ in the objective of $P_{\mathsf{WSEE}}^{\text{nested-QT}}$ is derived by applying the QT to both $\mathrm{EE}_k$ and $\mathsf{SINR}_k^{\mathrm{ul}}$. It is expressed as 
	\begin{align}\label{eqn:G}
		\mathcal{G}_k (\mathbf{q}, \mathbf{U}, \mathbf{y}, \mathbf{z}) \!=\! 2y_k \! \left( \! \log_2 \! \left( \! 1 \! + \! 2z_k \! \left( \mathsf{SINR}_{\text{num}, k}^{\mathrm{ul}} \left( \mathbf{q}, \mathbf{U} \right) \right)^{\frac{1}{2}} \! - \! z_k^2 \mathsf{SINR}_{\text{denom}, k}^{\mathrm{ul}} \left( \mathbf{q}, \mathbf{U} \right) \! \right) \! \right)^{\frac{1}{2}} \! - \! y_k^2 P_k \left( \mathbf{q} \right).
	\end{align}
	Similar to $\mathbf{y}$, the element $z_k$ of the auxiliary vector $\mathbf{z} = [ z_1, z_2, ..., z_K ]$ is used for optimizing $\mathsf{SINR}_k^{\mathrm{ul}}$, and it is given by
	\begin{align}\label{eqn:zupdate}
		z_k^* = \frac{\sqrt{\mathsf{SINR}_{\text{num}, k}^{\mathrm{ul}}\left( \mathbf{q}, \mathbf{U} \right)}}{\mathsf{SINR}_{\text{denom}, k}^{\mathrm{ul}}\left( \mathbf{q}, \mathbf{U} \right)}, \quad \forall k
	\end{align} where $\mathsf{SINR}_{\text{num}, k}^{\mathrm{ul}}$ and $\mathsf{SINR}_{\text{denom}, k}^{\mathrm{ul}}$ represent the numerator and denominator of $\mathsf{SINR}_{k}^{\mathrm{ul}}$, respectively. This element is iteratively updated using (\ref{eqn:zupdate}) for fixed $\mathbf{q}$ and $\mathbf{U}$.  Note that the objective function of the transformed  $P_{\mathsf{WSEE}}^{\text{nested-QT}}$ is concave with respect to $\mathbf{q}$ as stated in the following proposition.
	\begin{proposition}
		The differentiable function \(\mathcal{G}_k(\mathbf{q}, \mathbf{U}, \mathbf{y}, \mathbf{z})\) is concave with respect to \(\mathbf{q}\) when \(\mathbf{U}\), \(\mathbf{y}\), and \(\mathbf{z}\) are fixed.
	\end{proposition}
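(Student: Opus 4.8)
The plan is to prove concavity by writing $\mathcal{G}_k$ as a composition of elementary operations that each preserve concavity, relying on two facts already recorded in the paper: that the numerator $\mathsf{SINR}_{\text{num},k}^{\mathrm{ul}}$ and denominator $\mathsf{SINR}_{\text{denom},k}^{\mathrm{ul}}$ are affine in $\mathbf{q}$ for fixed $\mathbf{U}$, and that $P_k(\mathbf{q}) = q_k/\zeta_k + P_{\mathrm{CP},k}$ from (\ref{eqn:power_k}) and (\ref{eqn:transmit_power}) is affine in $\mathbf{q}$. Throughout I would use $y_k \ge 0$ and $z_k \ge 0$, which is automatic from the update rules (\ref{eqn:yupdate}) and (\ref{eqn:zupdate}) (and may in any case be imposed without loss of generality), together with $\mathsf{SINR}_{\text{num},k}^{\mathrm{ul}}(\mathbf{q},\mathbf{U}) \ge 0$ and $\mathsf{SINR}_{\text{denom},k}^{\mathrm{ul}}(\mathbf{q},\mathbf{U}) > 0$ on the feasible box $\{0 \le q_k \le p_{\mathrm{max},k}\}$.

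First I would peel off the last term: $-y_k^2 P_k(\mathbf{q})$ is affine in $\mathbf{q}$, so it suffices to prove concavity of $\Psi_k(\mathbf{q}) \triangleq 2 y_k \bigl(\log_2(1 + h_k(\mathbf{q}))\bigr)^{1/2}$, where $h_k(\mathbf{q}) \triangleq 2 z_k \bigl(\mathsf{SINR}_{\text{num},k}^{\mathrm{ul}}(\mathbf{q},\mathbf{U})\bigr)^{1/2} - z_k^2 \, \mathsf{SINR}_{\text{denom},k}^{\mathrm{ul}}(\mathbf{q},\mathbf{U})$; then $\mathcal{G}_k = \Psi_k - y_k^2 P_k$ is concave as the sum of a concave and an affine function. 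I would then build $\Psi_k$ from the inside out: $\bigl(\mathsf{SINR}_{\text{num},k}^{\mathrm{ul}}\bigr)^{1/2}$ is concave, being the concave increasing map $\sqrt{\cdot}$ composed with a nonnegative affine function; scaling by $2 z_k \ge 0$ and subtracting the affine term $z_k^2 \, \mathsf{SINR}_{\text{denom},k}^{\mathrm{ul}}$ leaves $h_k$ concave; $\log_2(1 + h_k)$ is concave, since $\log_2(1+\cdot)$ is concave and increasing composed with the concave $h_k$; $\bigl(\log_2(1 + h_k)\bigr)^{1/2}$ is concave, since $\sqrt{\cdot}$ is concave and nondecreasing; and multiplying by $2 y_k \ge 0$ preserves concavity.

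The one delicate point --- the step I expect to need an explicit sentence --- is the effective domain. The outer square root and the logarithm are real-valued only where $\log_2(1 + h_k(\mathbf{q})) \ge 0$, equivalently $h_k(\mathbf{q}) \ge 0$. Since $h_k$ is concave, the set $\{\mathbf{q} : h_k(\mathbf{q}) \ge 0\}$ is a superlevel set of a concave function and hence convex, so the composition rules above are applied on precisely the convex set on which $\mathcal{G}_k$ is defined; no nonconvexity is hidden in the domain, and the set is nonempty because at each iterate the update (\ref{eqn:zupdate}) makes $h_k$ equal $\mathsf{SINR}_k^{\mathrm{ul}} \ge 0$ there. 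Apart from this bookkeeping I do not anticipate any real obstacle: the proposition reduces to a chain of standard concavity-preserving operations once the affineness of the SINR numerator, the SINR denominator, and $P_k$ in $\mathbf{q}$ is invoked.
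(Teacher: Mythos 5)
Your proof is correct and follows essentially the same route as the paper's Appendix~A: peel off the affine term $-y_k^2 P_k(\mathbf{q})$, show the inner function $h_k$ is concave because $\sqrt{\mathsf{SINR}_{\text{num},k}^{\mathrm{ul}}}$ is a concave function composed with an affine map (the paper's Proposition~2) while the remaining term is affine, and then compose with a concave, increasing outer function. The only differences are refinements rather than a different approach: you invoke the standard composition rule twice ($\log_2(1+\cdot)$ and then $\sqrt{\cdot}$) instead of the paper's explicit sign computation of $\mathcal{H}'$ and $\mathcal{H}''$ for $\mathcal{H}(x)=\sqrt{\log_2(1+x)}$, and you make explicit the convexity of the effective domain $\{\mathbf{q}: h_k(\mathbf{q})\ge 0\}$, which the paper handles only informally by remarking that the SINR is positive.
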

	\begin{proof}
		The proof is provided in Appendix A.
	\end{proof}
	
	Now we can apply the AO approach to maximize the WSEE after transforming it into $P_{\mathsf{WSEE}}^{\text{nested-QT}}$ as in \eqref{P_QT2}, by optimizing $\mathbf{q}$ and $\mathbf{U}$ iteratively. First, the optimal LSFD combiner $\mathbf{U}$ is calculated for a given transmit power $\mathbf{q}$. Then, with the updated power and LSFD combiner, the auxiliary variables $\mathbf{y}$ and $\mathbf{z}$ are updated. Based on them, the problem $P_{\mathsf{WSEE}}^{\text{nested-QT}}$ is solved to update the WSEE value. These steps are repeated until the WSEE value converges, yielding the optimized $\mathbf{q}$, $\mathbf{U}$, and the optimized value of WSEE. The procedure described above is summarized in Algorithm \ref{alg:QT2}.
	
	In the $\mathbf{q}$-subproblem, the constraint set is convex, except for (\text{\ref{P1c}}). This non-convex constraint can be reformulated as
	\begin{align}\label{eqn:_const_1}
		\mathsf{SINR}_{k}^{\mathrm{ul}}(\mathbf{q}, \mathbf{U}) \geq 2^{\frac{\tau_c}{\tau_c - \tau_p}r_k} - 1.
	\end{align}
	The numerator and denominator of $\mathsf{SINR}_{k}^{\mathrm{ul}}$ can now be separated and expressed as follows. 
	\begin{align}\label{eqn:relax_const_2}
		\mathsf{SINR}_{\text{num}, k}^{\mathrm{ul}}(\mathbf{q}, \mathbf{U}) - (2^{\frac{\tau_c}{\tau_c - \tau_p}r_k} - 1) \cdot \mathsf{SINR}_{\text{denom}, k}^{\mathrm{ul}}(\mathbf{q}, \mathbf{U}) \geq 0.
	\end{align}
	With this reformulation, the constraint becomes a convex constraint with respect to $\mathbf{q}$ for a fixed $\mathbf{U}$, as both $\text{SINR}_{\text{num}, k}^{\text{ul}}$ and $\text{SINR}_{\text{denom}, k}^{\text{ul}}$ are affine in $\mathbf{q}$. Therefore, replacing the minimum rate constraint in  $P_{\mathsf{WSEE}}^{\text{nested-QT}}$ with (\ref{eqn:relax_const_2}) ensures that the $\mathbf{q}$-subproblem is concave, as guaranteed by Proposition 1.

	Although (\ref{eqn:G}) is concave, it remains a complicated nonlinear function. By introducing additional auxiliary variables, the $\mathbf{q}$-subproblem can be recast into an epigraph form that is easier for CVX solvers to handle.
	\begin{subequations}\label{P_QT3}
		\begin{align}
			P_{\mathsf{WSEE}}^{\mathbf{q}\text{-sub}}: \underset{\mathbf{q},\mathbf{s},\mathbf{t}}{\text{maximize}}\quad
			& \sum_{k=1}^{K} w_k\!\left(2y_k t_k - y_k^2 P_k(\mathbf q_k)\right) \label{PQT3a}\\[2pt]
			\text{subject to}\quad
			& t_k^2 \le s_k, \ \forall k, \label{PQT3b}\\
			& s_k \le \Bigl(1-\frac{\tau_p}{\tau_c}\Bigr)
			\log_{2}\!\bigl(1+\widehat{\mathsf{SINR}}_{k}\bigr),\ \forall k, \label{PQT3c}\\
			& \begin{aligned}
				\widehat{\mathsf{SINR}}_{k}
				&\triangleq 2z_k \sqrt{\mathsf{SINR}^{\text{ul}}_{\text{num},k}}
				\\[-2pt] &\quad
				- z_k^{2}\,\mathsf{SINR}^{\text{ul}}_{\text{denom},k} \ \ge 0,\ \forall k,
			\end{aligned} \label{PQT3d}\\
			& (\ref{PQT2b})\text{--}(\ref{PQT2c}), (\ref{eqn:opt_lsfd}), (\ref{eqn:yupdate}), (\ref{eqn:zupdate}). \label{PQT3e}
		\end{align}
	\end{subequations}
	In the modified problem $P_{\mathsf{WSEE}}^{\mathbf{q}\text{-sub}}$, (\ref{PQT3b}) and (\ref{PQT3c}) are convex, and $\widehat{\mathsf{SINR}}_{k}$ is concave in $\mathbf{q}$. Therefore, for fixed $\mathbf{y}, \mathbf{z}, \mathbf{U}$, we can solve the $\mathbf{q}$-subproblem in place of (\ref{PQT2a}) to obtain the desired update. The nested-QT-based AO algorithm that uses the proposed $\mathbf{q}$-subproblem is presented in Algorithms \ref{alg:QT2} and \ref{alg:qsub}.

	\subsection{Convergence Analysis}
	In the proposed Algorithm \ref{alg:QT2},  the objective obtained through successive applications of QT with power control and LSFD combiner in an AO is guaranteed to converge to a local optimum. To prove this, we can revisit the Proposition 1. From the result of Proposition 1, one can express that $\mathcal{G}_k$ is monotonically non-decreasing at each iteration as follows:
	\begin{align}\label{eqn:mono_inc}
		\begin{aligned}
			\mathcal{G}_k(\mathbf{q}^{(i+1)}, \mathbf{U}^{(i+1)}, \mathbf{y}^{(i+1)}, \mathbf{z}^{(i+1)}) 
			&\overset{\text{(a)}}{\geq} \mathcal{G}_k(\mathbf{q}^{(i)}, \mathbf{U}^{(i+1)}, \mathbf{y}^{(i+1)}, \mathbf{z}^{(i+1)}) \\
			&\overset{\text{(b)}}{\geq} \mathcal{G}_k(\mathbf{q}^{(i)}, \mathbf{U}^{(i+1)}, \mathbf{y}^{(i)}, \mathbf{z}^{(i+1)}) \\
			&\overset{\text{(c)}}{\geq} \mathcal{G}_k(\mathbf{q}^{(i)}, \mathbf{U}^{(i+1)}, \mathbf{y}^{(i)}, \mathbf{z}^{(i)}) \\
			&\overset{\text{(d)}}{\geq} \mathcal{G}_k(\mathbf{q}^{(i)}, \mathbf{U}^{(i)}, \mathbf{y}^{(i)}, \mathbf{z}^{(i)})
		\end{aligned}
	\end{align} where
	(a) is achieved by applying a CVX solver to satisfy the Karush-Kuhn-Tucker (KKT) condition during the update process of $\mathbf{q}$, because $\mathcal{G}_k(\mathbf{q} \mid \mathbf{U}, \mathbf{y}, \mathbf{z})$ and also the $\mathbf{q}$-subproblem are concave in $\mathbf{q}$.  This likewise applies when the $\mathbf{q}$-subproblem is solved by replacing it with $P_{\mathsf{WSEE}}^{\mathbf{q}\text{-sub}}$; (b) and (c) hold due to the concavity property of the auxiliary variable introduced when applying QT \cite{shen2018fractional};  finally, (d) follows from the fact that, with $\mathbf{q}$, $\mathbf{y}$, and $\mathbf{z}$ fixed, updating $\mathbf{U}$ reduces to a generalized eigenvalue problem, and solving it yields $\mathbf{u}_k$ that always maximizes $\mathsf{SINR}_k^{\mathrm{ul}}$. Since $\mathcal{G}_k$ is obtained by applying QT to $\mathsf{SINR}_k^{\mathrm{ul}}$ within $\mathcal{F}_k$, the property of QT that preserves the optimal objective value implies that the $\mathbf{u}_k$ computed via (\ref{eqn:opt_lsfd}) maximizes $\mathsf{SINR}_k^{\mathrm{ul}}$ and therefore achieves the global optimum for both $\mathcal{F}_k$ and $\mathcal{G}_k$.
	
	From (\ref{eqn:mono_inc}), it can be seen that the weighted sum of $\mathcal{G}_k$ is also monotonically non-decreasing at each iteration. Since the objective function value is monotonically non-decreasing and there is an upper bound due to the power constraint, the local convergence of Algorithm \ref{alg:QT2} is guaranteed\cite{boyd2004convex}. 
	On the other hand, the Dinkelbach transform-based algorithm used in Algorithm \ref{alg:wgee} does not theoretically guarantee local convergence even for the simplest form of multiple-ratio FP\cite{zappone2015energy}. The experimental convergence of these algorithms is verified through the results presented in Section \ref{sec:num_res}.
	
	\section{Computational Complexity\label{sec:comp}}
	This section analyzes the computational complexity of the algorithms proposed earlier. Algorithms \ref{alg:wgee} and \ref{alg:QT2} adopt an AO approach, where a generalized eigenvalue problem and concave sub-problems are solved iteratively.
	The key determinant in solving the generalized eigenvalue problem lies in computing the inverse of the $M \times M$ matrix for each of the $K$ UEs, which incurs a complexity of $\mathcal{O}(KM^3)$. The power control sub-problem, derived after applying either the Dinkelbach transform or QT, is solved using the interior point method embedded in a CVX solver \cite{mosek}. Given that the system has $K$ UEs, the computational cost of solving the KKT conditions once is $\mathcal{O}(K^3)$.  Furthermore, the number of iterations required for the CVX solver to achieve an inner tolerance $\varepsilon_{\text{in}}$ is $\mathcal{O}(\sqrt{K}\log(1/\varepsilon_{\text{in}}))$.
	To satisfy an overall accuracy of $\varepsilon_{\text{out}}$, the number of AO iterations is given by $I_{\text{iter}} = \mathcal{O}(\log(1/\varepsilon_{\text{out}}))$. Consequently, the total computational complexity of each algorithm is $\mathcal{O}(I_{\text{iter}}(KM^3 + K^{3.5}\log(1/\varepsilon_{\text{in}})))$\cite{ben2001lectures}.
	
	Although all algorithms share the same computational complexity in terms of big-O notation, their actual runtime differs. Unlike the Dinkelbach transform, which solves a linear problem for the auxiliary variable, QT reformulates the numerator and denominator of the objective function into a quadratic surrogate with respect to the auxiliary variable. As a result, the number of outer iterations $I_{\text{iter}}$ required to solve the power control sub-problem is higher for QT than for the Dinkelbach transform. 
	In fact, the convergence speed of QT is known to be slower than the superlinear convergence of the Dinkelbach transform \cite{shen2018fractional, shen2024accelerating}. However, when the $\mathbf{q}$-subproblem is replaced by $P_{\mathsf{WSEE}}^{\mathbf{q}\text{-sub}}$, the order of computational complexity remains the same, while the modified constraints lead to faster solver processing. Consequently, using Algorithms \ref{alg:QT2} and \ref{alg:qsub} together yields a shorter runtime than Algorithm \ref{alg:wgee} or the Dinkelbach-like transform-based scheme.
	
	\section{Numerical Results\label{sec:num_res}}
	\begin{table}[!t]
		\caption{ Parameter values for simulations}
		\label{table:parameters}
		\centering
		\setlength{\tabcolsep}{12pt} 
		\renewcommand{\arraystretch}{0.8} 
		\begin{tabular}{ll}
			\toprule
			\textbf{Parameter} & \textbf{Value} \\
			\midrule
			Network size, $D$ & $1\mathrm{km}$ \\
			Bandwidth, $B$ & $20\mathrm{MHz}$ \\
			Carrier frequency, $f_0$ & $1.9\mathrm{GHz}$ \\
			Noise figure & $7\mathrm{dB}$ \\
			AP selection threshold, $\delta$ & $0.99$ \\
			PA efficiency, $\zeta$ & $0.4$ \\
			Max.\ power (high-priority UE), $\rho_{p}^{\text{high}},p_{\max}^{\text{high}}$ & $0.5\mathrm{W}$ \\
			Max.\ power (low-priority UE), $\rho_{p}^{\text{low}},p_{\max}^{\text{low}}$ & $0.2\mathrm{W}$ \\
			Min.\ data rate (high-priority UE), $r_k^{\text{high}}$ & $1\mathrm{bps/Hz}$ \\
			Min.\ data rate (low-priority UE), $r_k^{\text{low}}$ & $0.5\mathrm{bps/Hz}$ \\
			Circuit power of UE, $P_{\mathrm{CP}}$ & $1\mathrm{W}$ \\
			Coherence / pilot interval, $\tau_c,\tau_p$ & $200,20$ \\
			\bottomrule
		\end{tabular}
	\end{table}
	\subsection{Simulation Setup} In this section, we evaluate WSEE of uplink user-centric CF-mMIMO systems. In our simulations, we assume a scenario where 256 APs equipped with 4 antennas each and 16 single-antenna UEs are randomly distributed in a square area of size $D \times D$. The large-scale fading coefficient $\beta_{mk}$ is modeled as
	\begin{align}\label{eqn:beta_simul}
		\beta_{mk} = \text{PL}_{mk} z_{mk},
	\end{align}
	where $\text{PL}_{mk}$ represents the path loss between the $m$-th AP and the $k$-th UE, and $z_{mk}$ denotes the log-normal shadowing with a standard deviation of $\sigma_{\mathrm{sh}} = 8$dB. $\text{PL}_{mk}$ is calculated based on the three-slope model described in \cite{ngo2017total}:
	\begin{align}
		\text{PL}_{mk} = 
		\begin{cases} 
			-L - 35 \log_{10}(d_{mk}), \quad \text{if } d_{mk} > d_1, \\
			-L - 15 \log_{10}(d_1) - 20 \log_{10}(d_{mk}), \quad \text{if } d_0 < d_{mk} \leq d_1, \\
			-L - 15 \log_{10}(d_1) - 20 \log_{10}(d_0), \quad \text{if } d_{mk} \leq d_0,
		\end{cases}
	\end{align}
	where $L=140.7$dB is a constant determined by the communication environment, such as carrier frequency and antenna altitude. $d_{mk}$ is the distance between AP $m$ and UE $k$, and it is assumed to be $d_0=10 \text{m}$ and $d_1=50 \text{m}$.
	
	The CF-mMIMO system consists of two types of UEs with different priorities, each present in equal numbers. UEs with greater maximum transmit power and more demanding data transmission requirements are given higher priority, while UEs with smaller maximum transmit power and less stringent data needs are assigned lower priority. The system can determine each UE's weight based on its priority. We define the ratio of the higher-priority UE's weight $w_{\mathrm{high}}$ to the lower-priority UE's weight $w_{\mathrm{low}}$ as $\omega = \frac{w_{\mathrm{high}}}{w_{\mathrm{low}}}$. Additionally, to ensure a fair comparison between results using different values of $\omega$, the sum of all UEs' weights was normalized to be $K$. The values of the other parameters used in the simulation setup are summarized in Table \ref{table:parameters}.
	
	\subsection{Results and Discussions} 
	For the comparison, we consider the following schemes:
	\begin{itemize}[leftmargin=*, itemsep=2pt]
		\item \textbf{Nested-QT}: a scheme that solves $P_{\mathsf{WSEE}}^{\text{nested-QT}}$ directly using Algorithm \ref{alg:QT2}.
		\item \textbf{Modified-QT}: a scheme that solves $P_{\mathsf{WSEE}}^{\text{nested-QT}}$ and $P_{\mathsf{WSEE}}^{\mathbf{q}\text{-sub}}$ using the proposed Algorithms \ref{alg:QT2}–\ref{alg:qsub}, with modified constraints in the $\mathbf{q}$-subproblem to accelerate convergence.
		\item \textbf{Dinkelbach-like}: a scheme that applies the Dinkelbach transform to each $\mathrm{EE}_k$ in the WSEE objective on a per-UE basis and then performs AO.
		\item \textbf{Dinkelbach-global}: a scheme implemented as Algorithm \ref{alg:wgee} that applies the Dinkelbach transform to the single-ratio WGEE objective followed by AO.
		\item \textbf{Full-power}: a scheme that updates the LSFD via (13) while fixing each UE’s transmit power at its maximum value.
		\item \textbf{Benchmark}: the approach in \cite{zhao2024towards} that uses a closed-form QT update for system EE, adapted here to maximize WGEE under the system model considered in this work.
	\end{itemize}
	
	\begin{figure}[t]
		\centering
		\begin{subfigure}[b]{0.49\textwidth}
			\centering
			\begin{overpic}[width=\linewidth]{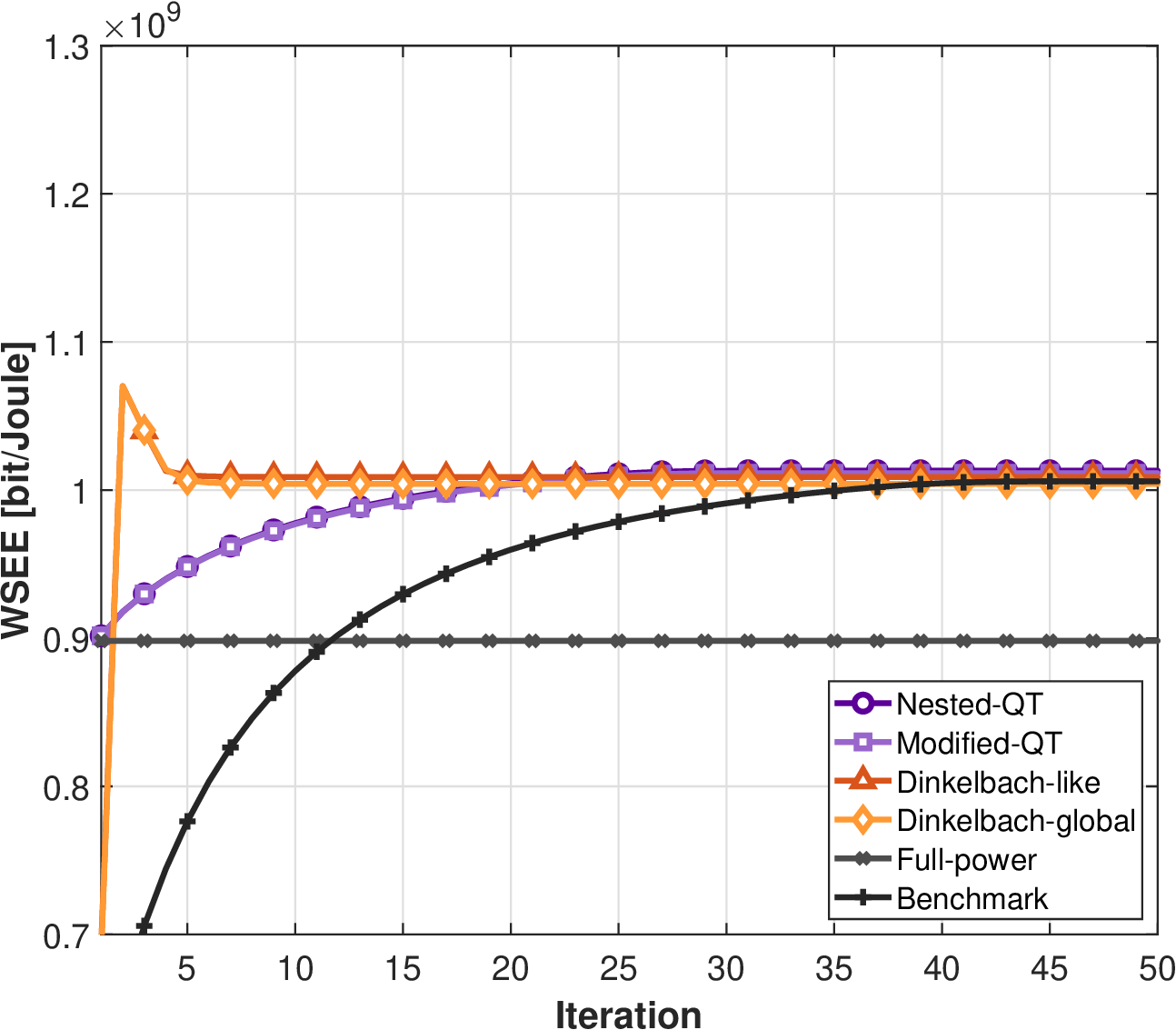}%
				\put(360,550){\includegraphics[width=0.32\linewidth]
					{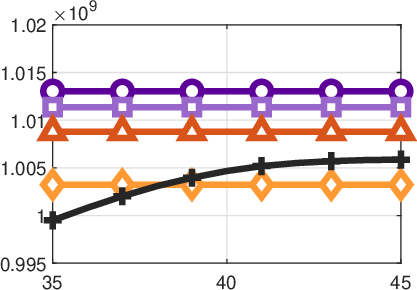}}%
				\put(710,440){\linethickness{0.2mm}\color{black}\polygon(0,0)(185,0)(185,55)(0,55)}%
				\put(670,650){\linethickness{0.2mm}\color{black}\line(1,-2){77}}%
			\end{overpic}
			\caption{When $\omega = 1$.}
			\label{fig:WSEE_iter_omega1}
		\end{subfigure}
		\hfill 
		\begin{subfigure}[b]{0.49\textwidth}
			\centering
			\begin{overpic}[width=\linewidth]{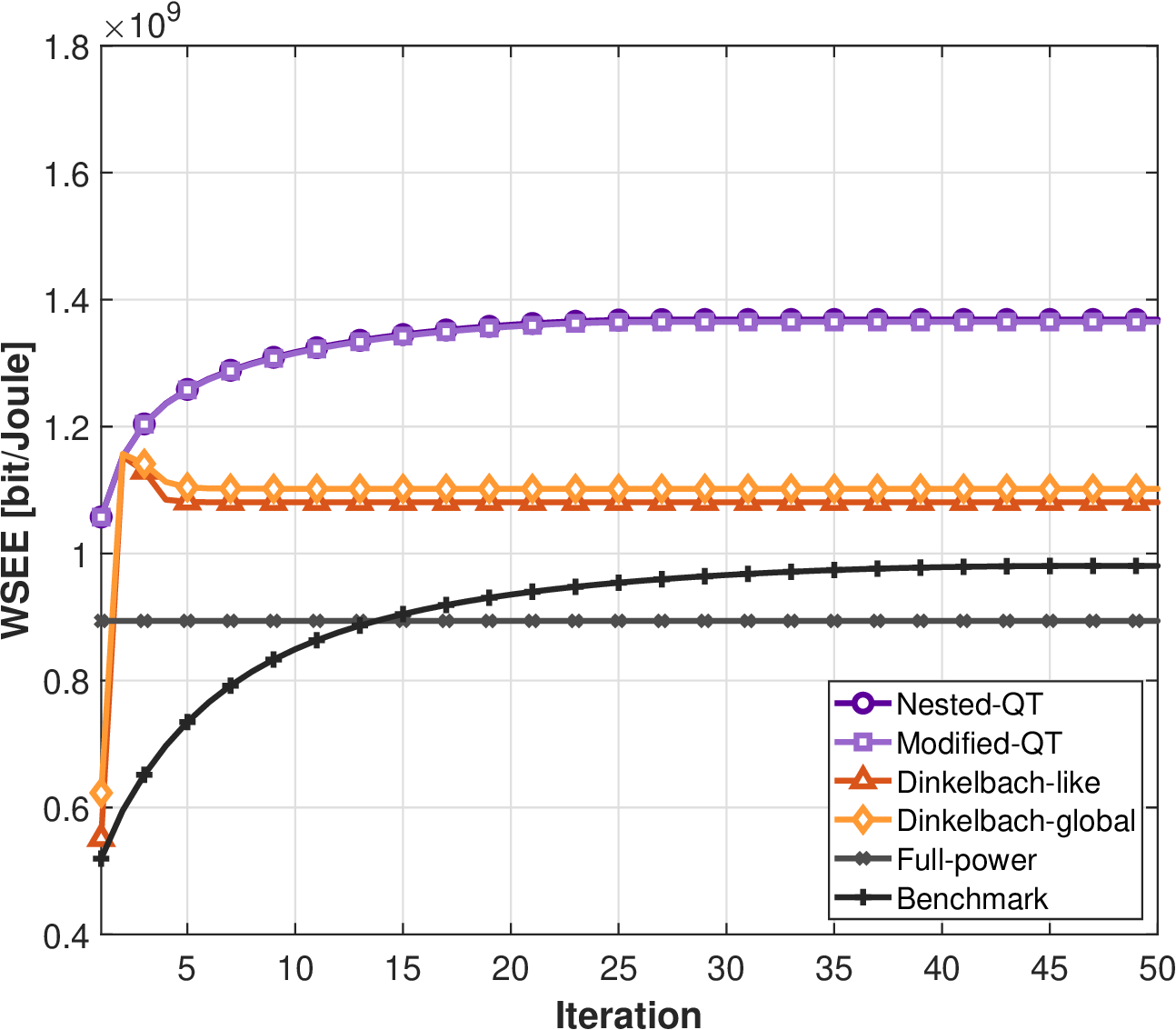}%
				\put(160,650){\includegraphics[width=0.4\linewidth]
					{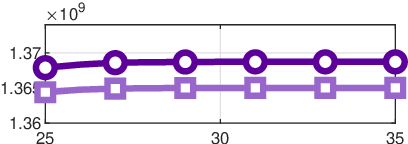}}%
				\put(525,580){\linethickness{0.2mm}\color{black}\polygon(0,0)(185,0)(185,55)(0,55)}%
				\put(480,670){\linethickness{0.2mm}\color{black}\line(1,-1){44}}%
			\end{overpic}
			\caption{When $\omega = 10$.}
			\label{fig:WSEE_iter_omega10}
		\end{subfigure}
		\caption{The WSEE over iterations with different weighting factors.}
		\label{fig:WSEE_iter_combined}
	\end{figure}
	
	
	Fig. \ref{fig:WSEE_iter_omega1} illustrates the variation of WSEE with respect to the number of iterations when all UEs are assigned equal weights ($\omega = 1$). In the early iterations, Dinkelbach transform-based algorithms such as \textbf{Dinkelbach-like} and \textbf{Dinkelbach-global} exhibit higher WSEE values. However, in terms of the final performance, the proposed QT-based algorithms, namely \textbf{Nested-QT} and \textbf{Modified-QT}, outperform them. The Dinkelbach transform-based schemes show almost comparable performance to each other, while the QT-based schemes also present nearly identical convergence curves. Nevertheless, the \textbf{Nested-QT} algorithm achieves the best final performance with a slight margin. The Dinkelbach transform-based schemes display a sharp increase in performance at the very beginning followed by a gradual decline, which occurs because, unlike the proposed QT-based algorithms that ensure monotonic convergence, the Dinkelbach transform does not guarantee monotonic convergence. The \textbf{Benchmark} scheme exhibits a similar final performance to the Dinkelbach transform-based schemes, but it requires the largest number of iterations to converge and shows a relatively slower rate of performance improvement.
	
	Fig. \ref{fig:WSEE_iter_omega10} presents the WSEE performance when $\omega = 10$, representing a scenario where higher-priority UEs are ensured greater energy efficiency. Similar to the previous results, the proposed QT-based schemes achieve the best WSEE convergence performance. In terms of the converged value, the \textbf{Nested-QT} scheme outperforms \textbf{Modified-QT} by 0.3\%, \textbf{Dinkelbach-like} by 26.7\%, \textbf{Dinkelbach-global} by 24.2\%, \textbf{Full-power} by 53.1\%, and \textbf{Benchmark} by 39.6\%. Moreover, unlike the case with $\omega = 1$, the QT-based schemes also outperform the Dinkelbach transform-based algorithms in terms of peak value. This implies that the proposed method becomes more advantageous when there exists a difference in the weights among UEs. It also demonstrates that the proposed QT-based algorithms are particularly well-suited for optimizing network performance in CF-mMIMO systems with heterogeneous UE groups, where user priority differentiation is required.
	
	\begin{table}[t]
		\caption{Runtime to convergence across schemes, $M = 256,~ K = 16, ~\omega = 1$.}
		\label{table:conv_time}
		\centering
		\setlength{\tabcolsep}{12pt}
		\renewcommand{\arraystretch}{0.8}
		\begin{tabular}{lcc}
			\toprule
			\textbf{Scheme} & \textbf{Time [s]} & \textbf{Iterations} \\
			\midrule
			Nested-QT         & $0.336$ & 29 \\
			Modified-QT       & $\textbf{0.216}$ & 30 \\
			Dinkelbach-like   & $0.257$ & \textbf{10} \\
			Dinkelbach-global & $5.400$ & 13 \\
			Benchmark         & $2.706$ & 39 \\
			\bottomrule
		\end{tabular}
	\end{table}
	
	\begin{figure}[!t] 
		\centering
		\includegraphics[width=0.6\columnwidth]{./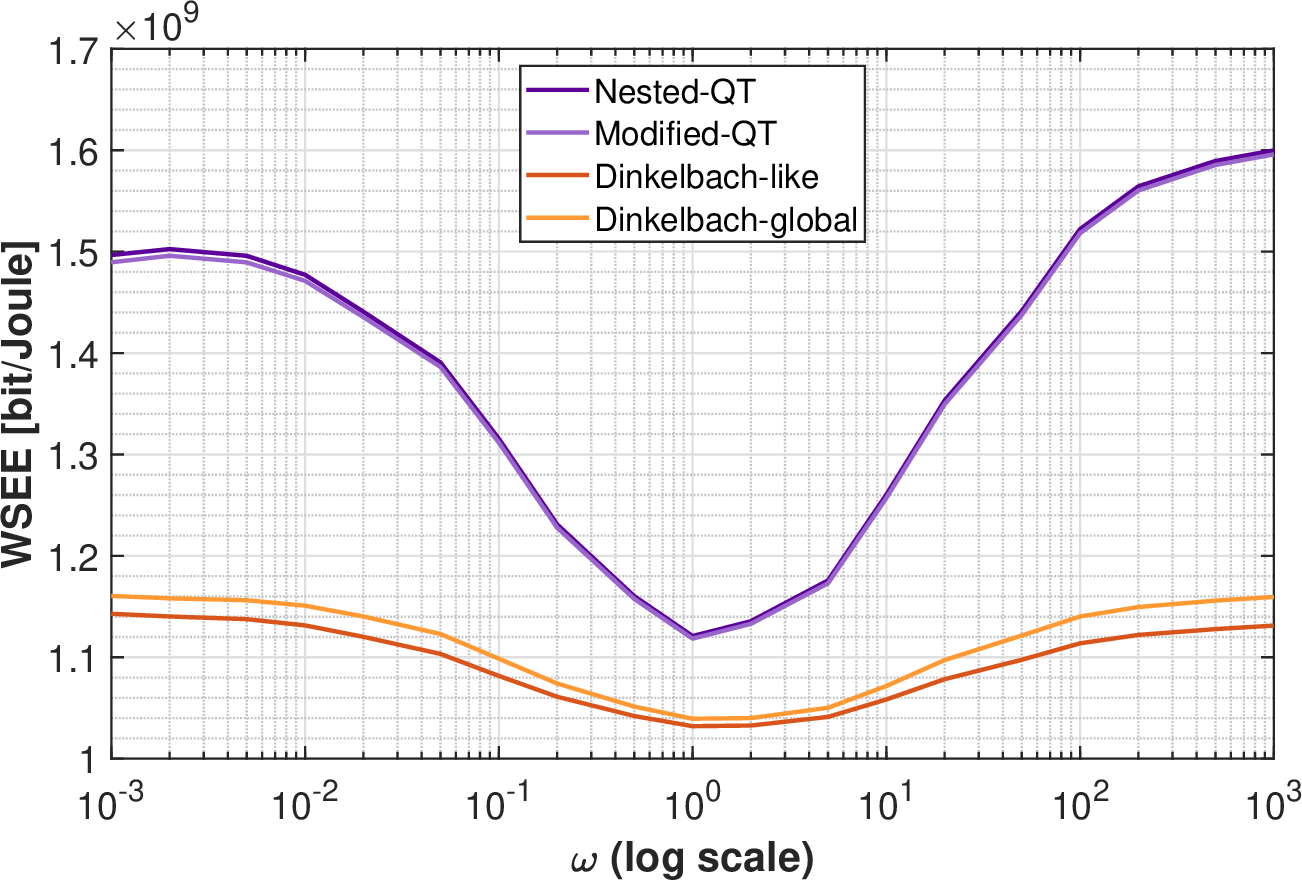}
		\caption{Curves of WSEE versus $\omega$.}
		\label{fig:WSEE_versus_omega}
	\end{figure}
	
	Table \ref{table:conv_time} summarizes the number of iterations and the elapsed time required for each scheme to converge under the same conditions as in Fig. \ref{fig:WSEE_iter_omega10}. In terms of the number of iterations, the Dinkelbach transform-based schemes show the best performance due to their superlinear convergence property, whereas the QT-based schemes generally exhibit a more gradual convergence rate \cite{shen2024accelerating}. From the perspective of total execution time, the proposed \textbf{Nested-QT} scheme is about 31\% slower than \textbf{Dinkelbach-like}. In contrast, the \textbf{Modified-QT} scheme introduces an auxiliary variable to alleviate nonlinearity and maintains nearly the same WSEE performance as \textbf{Nested-QT}. Moreover, it converges approximately 16\% faster than \textbf{Dinkelbach-like}, making it the fastest among all schemes. The \textbf{Dinkelbach-global} and \textbf{Benchmark} schemes require significantly longer convergence times since they employ algorithms designed for GEE maximization, which involves single-ratio optimization representing the entire network system.
	
	\begin{figure}[!t] 
		\centering
		\includegraphics[width=0.6\columnwidth]{./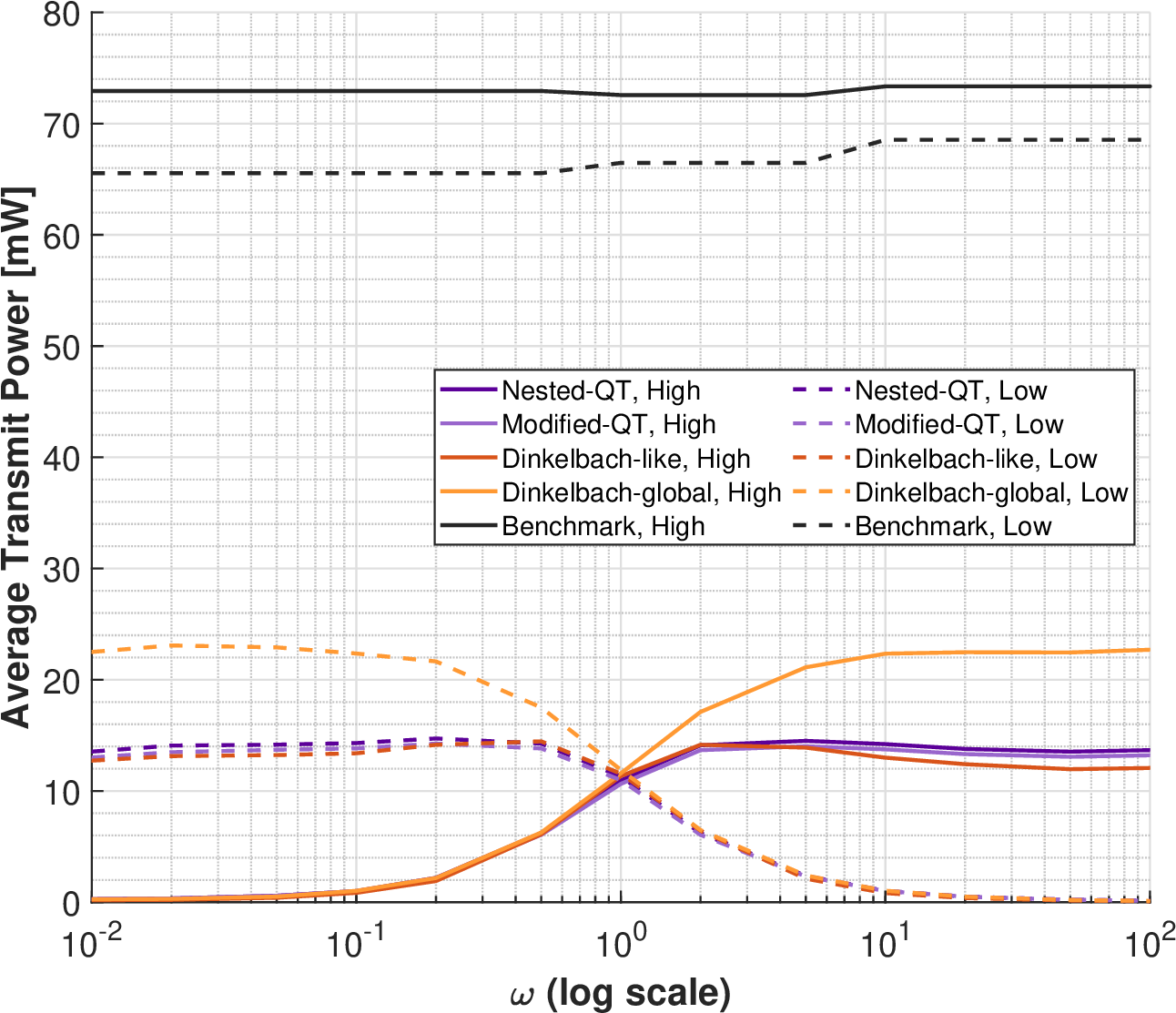}
		\caption{Average transmit power consumption.}
		\label{fig:tpc}
	\end{figure}
	
	Fig. \ref{fig:WSEE_versus_omega} shows the variation of WSEE for the QT- and Dinkelbach transform-based schemes when $\omega$ is varied from 0.001 to 1000. For cases where $\omega < 0.1$ or $\omega > 10$, it is difficult to satisfy the minimum requirement constraints of all UEs, so the optimization is performed without those constraints.
	Across all schemes, the smallest WSEE values are observed at $\omega = 1$, and the converged WSEE values gradually increase as $\omega$ becomes either larger or smaller. This increasing tendency at the extreme values of $\omega$ is more pronounced for the QT-based algorithms. In the QT-based schemes, the change in WSEE with respect to $\omega$ is not symmetric; the increase is slightly greater when $\omega < 1$ than when $\omega > 1$. This is because, for $\omega < 1$, the UEs with smaller SE requirements are given higher priority, leading to larger EE values for high-priority UEs compared to the case when $\omega > 1$. When $\omega$ reaches extreme levels such as 1000 or 0.001, the rate of WSEE increase becomes significantly slower, and the WSEE values are observed to converge.
	
	Fig. \ref{fig:tpc} illustrates the average transmit power variation of each UE group with respect to $\omega$ for different schemes. The QT-based schemes and the Dinkelbach transform-based schemes exhibit similar transmit power distributions for the low-priority UEs. However, for the high-priority UEs, the \textbf{Dinkelbach-like} scheme consumes significantly more transmit power compared to the other schemes. The \textbf{Benchmark} scheme, which updates each UE’s transmit power through a closed-form calculation rather than an optimization process, requires several times higher transmit power than the other schemes. Therefore, from the perspective of uplink transmit power consumption, the proposed \textbf{Nested-QT} and \textbf{Modified-QT} schemes demonstrate the most favorable performance.
	
	\begin{figure}[t]
		\centering
		\begin{subfigure}[b]{0.49\textwidth}
			\centering
			\includegraphics[width=\linewidth]{./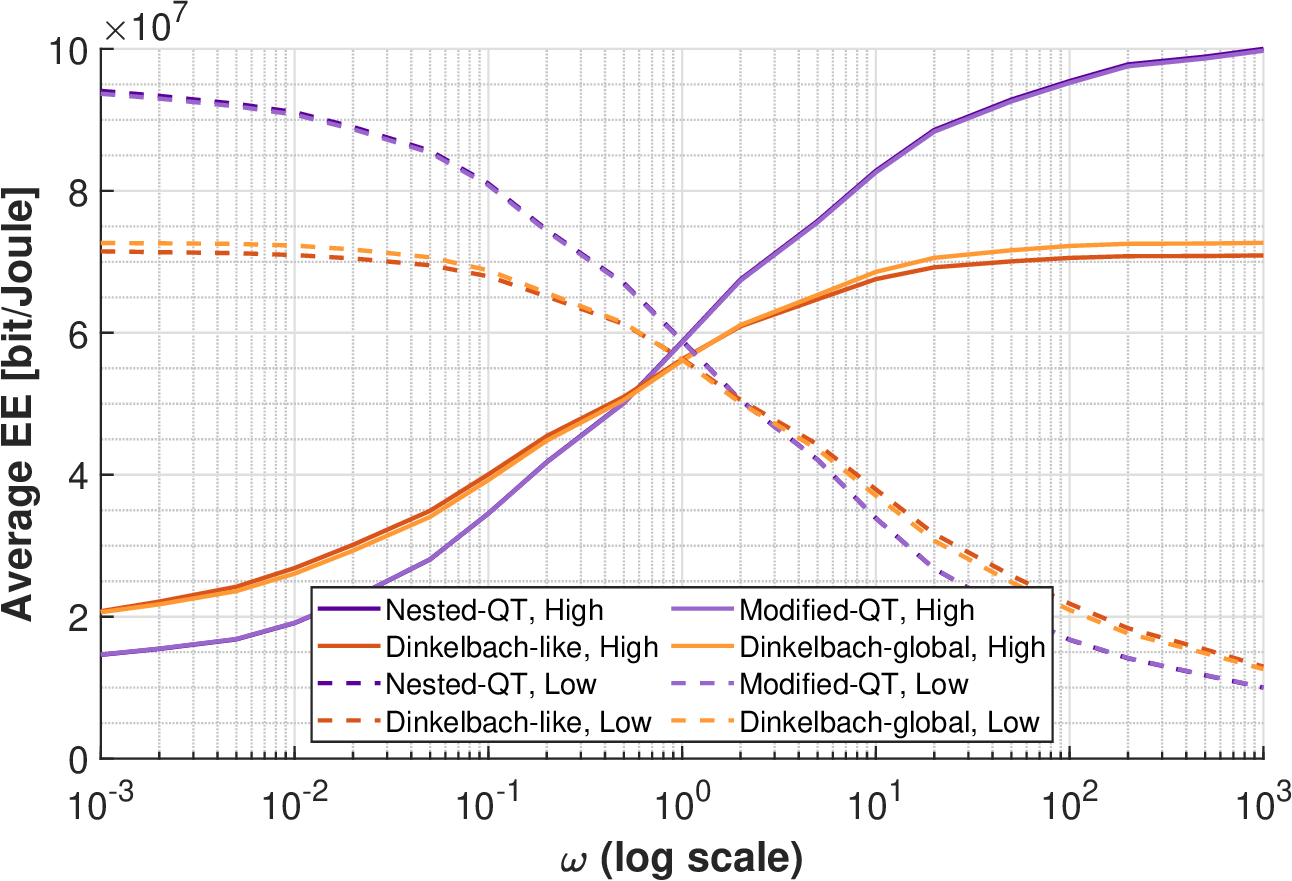}
			\caption{EE distribution.}
			\label{fig:avg_EE}
		\end{subfigure}
		\hfill
		\begin{subfigure}[b]{0.49\textwidth}
			\centering
			\includegraphics[width=\linewidth]{./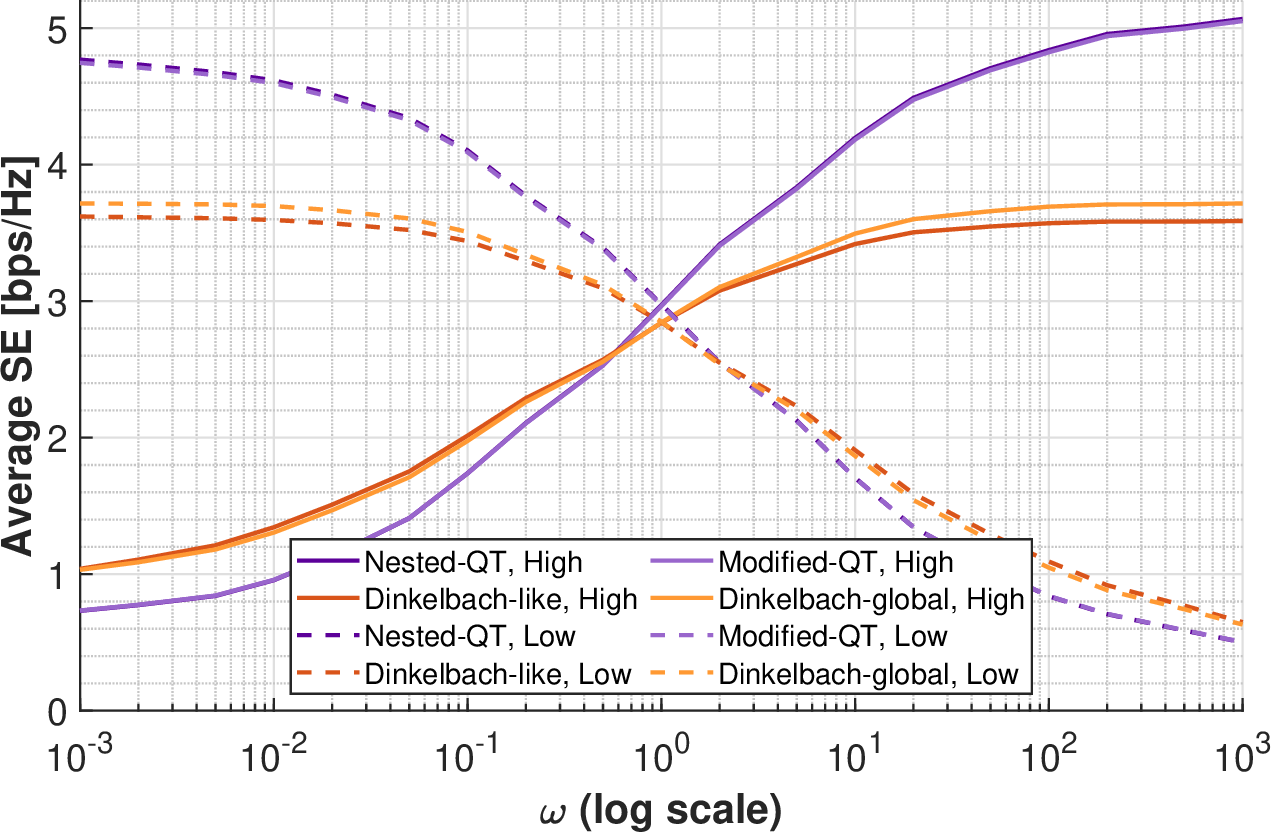}
			\caption{SE distribution.}
			\label{fig:avg_SE}
		\end{subfigure}
		\caption{Average EE and SE distribution by UE priorities over $\omega$.}
		\label{fig:avg_EE_SE_combined}
	\end{figure}
	Figs. \ref{fig:avg_EE} and \ref{fig:avg_SE} illustrate the variations in EE and SE for each UE group according to their priority under the same conditions as in Fig. \ref{fig:WSEE_versus_omega}. As $\omega$ increases beyond 1, the EE and SE of the high-priority UEs increase, whereas when $\omega$ becomes smaller than 1, the EE and SE of the low-priority UEs increase instead. The EE curve in Fig. \ref{fig:avg_EE} and the SE curve in Fig. \ref{fig:avg_SE} exhibit nearly identical trends, which is because the uplink transmit power of each UE also shows a roughly symmetric behavior with respect to $\omega = 1$. As shown in Fig. \ref{fig:avg_SE}, when $\omega$ becomes extremely large or small, the average SE of certain UEs falls below the minimum requirement of 1 bps/Hz. Therefore, although the overall system WSEE may remain high under such extreme $\omega$ values, communication failures may occur for some UEs. It is thus important to select an appropriate $\omega$ that balances both the overall WSEE performance of the system and the minimum performance requirements of individual UEs.
	
	\begin{figure}[!t] 
		\centering
		\includegraphics[width=0.75\columnwidth]{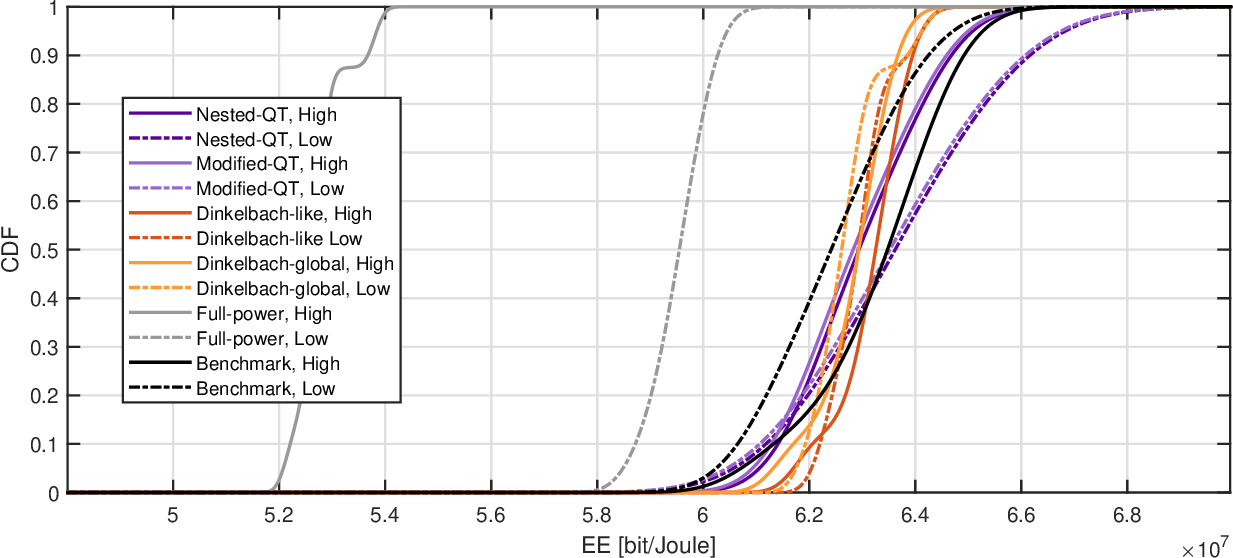}
		\caption{CDF plot of optimization methods and UE priorities, $M = 256,~ K = 16, ~\omega = 1$.}
		\label{fig:cdf_omega1}
		\vspace{5mm}
		\includegraphics[width=0.75\columnwidth]{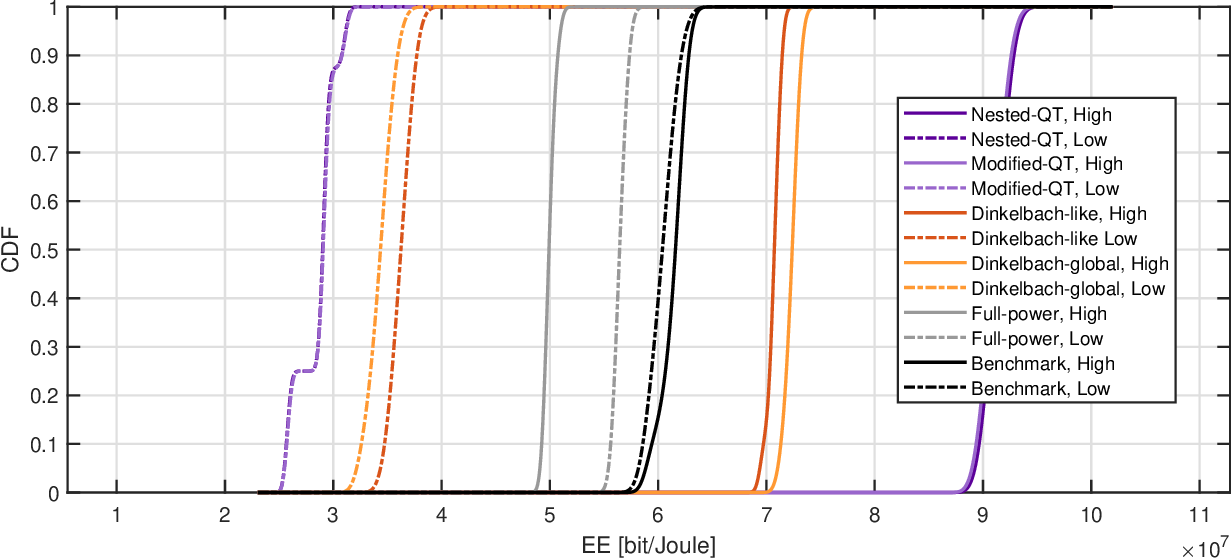}
		\caption{CDF plot of optimization methods and UE priorities, $M = 256,~ K = 16, ~\omega = 10$.}
		\label{fig:cdf_omega10}
	\end{figure}
	Figs. \ref{fig:cdf_omega1} and \ref{fig:cdf_omega10} show the CDF curves of the EE distribution for high- and low-priority UEs in a system with $M = 256$ and $K = 16$, when $\omega$ is set to 1 and 10, respectively. When $\omega = 1$, all schemes except for \textbf{Full-power} exhibit a similar EE range, with most UEs densely concentrated within the same performance band. Among them, however, the proposed \textbf{Nested-QT} and \textbf{Modified-QT} schemes achieve superior performance, particularly for the top 50\% of UEs. When $\omega = 10$, except for the \textbf{Full-power} and \textbf{Benchmark} schemes that are largely unaffected by $\omega$, the optimization-based schemes clearly separate the high-priority and low-priority UE groups. This behavior arises because the weighted optimization algorithm aims to maximize the EE of high-priority UEs while guaranteeing only minimal performance for low-priority UEs. In particular, the proposed \textbf{Nested-QT} and \textbf{Modified-QT} schemes exhibit the most distinct separation between the two groups.
	
	\begin{figure}[!t] 
		\centering
		\includegraphics[width=0.75\columnwidth]{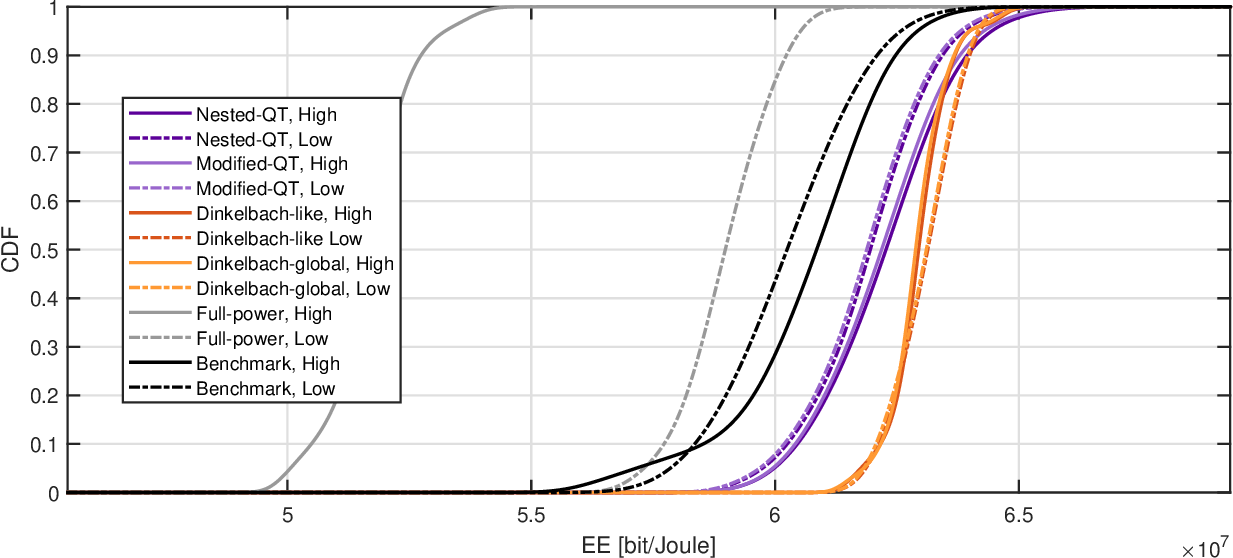}
		\caption{CDF Plot of optimization methods and UE priorities, $M = 1024,~ K = 50, ~\omega = 1$.}
		\label{fig:cdf_omega1_large}
		\vspace{5mm}
		\includegraphics[width=0.75\columnwidth]{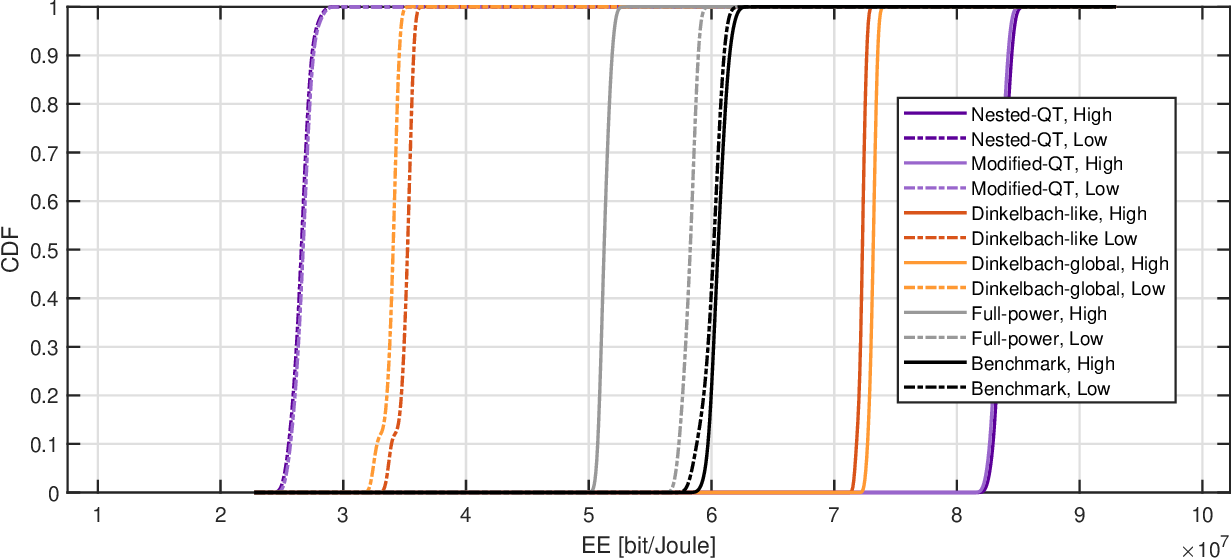}
		\caption{CDF Plot of optimization methods and UE priorities, $M = 1024,~ K = 50, ~\omega = 10$.}
		\label{fig:cdf_omega10_large}
	\end{figure}
	Figs. \ref{fig:cdf_omega1_large}–\ref{fig:cdf_omega10_large} present the CDF curves of the EE distribution for high- and low-priority UEs in a larger network with $M = 1024$ and $K = 50$, under $\omega = 1$ and $\omega = 10$. Consistent with the previous results, when $\omega = 10$, the UEs tend to be separated into distinct groups according to their assigned priorities, whereas this distinction is less apparent when $\omega = 1$. These results confirm that the proposed algorithms maintain the expected behavior even in a large-scale network configuration with 50 simultaneous UEs and a pilot length of $\tau_p = 20$, thereby inevitably leading to pilot contamination.
	
	\section{Conclusions\label{sec:conclusion}}
	We introduced and analyzed the uplink WSEE of a user-centric CF-mMIMO system, expanding beyond the conventional GEE-focused discussions in the literature. Unlike GEE-based analyses, which emphasize maximizing overall energy efficiency rather than the efficiency of individual UEs, WSEE enables an operational strategy tailored to the system's requirements, explicitly accounting for the characteristics and priorities of each UE. To effectively maximize WSEE’s inherently multiple-ratio structure, we developed QT-based optimization algorithms.
	Our numerical results show that, in scenarios with strongly heterogeneous UE requirements and weights, the proposed design can achieve more than 20\% gains in energy efficiency compared to the GEE-focused reference solution, while still outperforming it even in equal service request settings with uniform weights. In addition, the QT-based algorithms provide these gains with substantially reduced runtime and faster convergence compared to existing GEE-focused approaches, making them attractive for practical implementations.
	
	Our framework thus offers a foundation for making system operation policy decisions that can achieve Pareto optimality, especially in systems with UEs of diverse capabilities and requirements. For future work, we plan to enhance realism by incorporating nonlinear PA distortion at the user side and to jointly optimize AP–UE association, thereby improving scalability to much larger network deployments.
	
	\appendices
	\section{Proof of Proposition 1}\label{appi}
	For notational simplicity, we drop the subscript $k$ and represent $\mathcal{G}(\mathbf{q}) \triangleq \mathcal{G}_k(\mathbf{q} \mid \mathbf{U}, \mathbf{y}, \mathbf{z})$ in the proof. From (\ref{eqn:G}), $\mathcal{G}(\mathbf{q})$ can be reformulated as 
	\begin{align}\label{eqn:G2}
		\mathcal{G}(\mathbf{q}) = 2y (\mathcal{H} \circ h) (\mathbf{q}) - y^2P_k(\mathbf{q}).
	\end{align}
	Here, the outer function $\mathcal{H}(x)$ and the inner function $h(\mathbf{q})$ are defined, respectively,  as follows.
	\begin{align}\label{eqn:H}
		\mathcal{H}(x) = \sqrt{\log_2{\left(1+x\right)}},
	\end{align}
	\begin{align}\label{eqn:h}
		h(\mathbf{q}) = 1 + 2z\sqrt{\mathsf{SINR}_{\text{num}, k}^{\mathrm{ul}}(\mathbf{q})} - z^2 \cdot \mathsf{SINR}_{\text{denom}, k}^{\mathrm{ul}}(\mathbf{q}).
	\end{align}
	Since $P_k(\mathbf{q})$ is an affine function of $\mathbf{q}$, as shown in (\ref{eqn:power_k}) and (\ref{eqn:transmit_power}), $\mathcal{G} (\mathbf{q})$ is concave with respect to $\mathbf{q}$ if $(\mathcal{H} \circ h) (\mathbf{q})$ is concave with respect to $\mathbf{q}$. 
	
	To analyze the concavity of $\mathcal{G}(\mathbf{q})$, we represent its second derivative as  
	\begin{align}\label{eqn:second_deriv_G}
		\mathcal{G}''(\mathbf{q}) \propto (\mathcal{H}'' \circ h) (\mathbf{q}) \cdot (h'(\mathbf{q}))^2 + (\mathcal{H}' \circ h) (\mathbf{q}) \cdot h''(\mathbf{q}).
	\end{align}
	It can be easily verified that the first and second derivatives of $\mathcal{H}(x)$ are always positive and negative, respectively, in its domain $x > 0$.
	\begin{align}\label{eqn:first_deriv_H}
		\mathcal{H}'(x) = \frac{1}{2(1+x)\sqrt{\log_2{(1+x)}}} > 0,
	\end{align}
	\begin{align}\label{eqn:second_deriv_H}
		\mathcal{H}''(x) = -\frac{2\log_2{(1+x)} + 1}{4(1+x)^2(\log_2{(1+x)})^{3/2}} < 0.
	\end{align}
	In this paper, the input of $\mathcal{H}(x)$ represents the SINR, which practically takes positive values. For $h(\mathbf{q})$, we have already confirmed that both $\mathsf{SINR}_{\text{num}, k}^{\mathrm{ul}}(\mathbf{q})$ and $\mathsf{SINR}_{\text{denom}, k}^{\mathrm{ul}}(\mathbf{q})$ are affine functions of $\mathbf{q}$. Therefore, it can be identified that if $\sqrt{\mathsf{SINR}_{\text{num}, k}^{\mathrm{ul}}(\mathbf{q})}$ is concave with respect to $\mathbf{q}$, then the entire $h(\mathbf{q})$ is also concave with respect to $\mathbf{q}$. The concavity of $\sqrt{\mathsf{SINR}_{\text{num}, k}^{\mathrm{ul}}(\mathbf{q})}$ can be proven through the following proposition.
	\begin{proposition}
		The composition of a concave function $\phi : \mathbb{R}^{d_2} \to \mathbb{R}$ and an affine function $f : \mathbb{R}^{d_1} \to \mathbb{R}^{d_2}$ is still a concave function.
	\end{proposition}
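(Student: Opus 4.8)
The plan is to prove the standard composition rule: if $\phi$ is concave and $f$ is affine, then $\phi \circ f$ is concave. First I would unwind the definitions. Write $f(\mathbf{x}) = A\mathbf{x} + \mathbf{b}$ for some matrix $A \in \mathbb{R}^{d_2 \times d_1}$ and vector $\mathbf{b} \in \mathbb{R}^{d_2}$, which is the general form of an affine map. Fix arbitrary points $\mathbf{x}_1, \mathbf{x}_2 \in \mathbb{R}^{d_1}$ and a scalar $\theta \in [0,1]$. The goal is to verify
\begin{align}\label{eqn:comp_goal}
    (\phi \circ f)(\theta \mathbf{x}_1 + (1-\theta)\mathbf{x}_2) \ge \theta (\phi \circ f)(\mathbf{x}_1) + (1-\theta)(\phi \circ f)(\mathbf{x}_2).
\end{align}

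The key step is to exploit that $f$ commutes with convex combinations \emph{exactly}, since it is affine: $f(\theta \mathbf{x}_1 + (1-\theta)\mathbf{x}_2) = \theta f(\mathbf{x}_1) + (1-\theta) f(\mathbf{x}_2)$. Substituting this identity into the left-hand side of \eqref{eqn:comp_goal} turns it into $\phi(\theta f(\mathbf{x}_1) + (1-\theta) f(\mathbf{x}_2))$, and then the concavity of $\phi$ applied to the two points $f(\mathbf{x}_1), f(\mathbf{x}_2) \in \mathbb{R}^{d_2}$ gives the desired inequality directly. That completes the argument.

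Applying this to our setting: $\mathsf{SINR}_{\text{num}, k}^{\mathrm{ul}}(\mathbf{q})$ is affine in $\mathbf{q}$ (noted right after Lemma~\ref{lemma:sinr}) and $\phi(x) = \sqrt{x}$ is concave on $x \ge 0$, so $\sqrt{\mathsf{SINR}_{\text{num}, k}^{\mathrm{ul}}(\mathbf{q})}$ is concave in $\mathbf{q}$; since $-z^2 \cdot \mathsf{SINR}_{\text{denom}, k}^{\mathrm{ul}}(\mathbf{q})$ is affine and the constant $1$ is trivially concave, $h(\mathbf{q})$ in \eqref{eqn:h} is a sum of concave functions and hence concave. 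Combined with \eqref{eqn:first_deriv_H}--\eqref{eqn:second_deriv_H} and the second-derivative expression \eqref{eqn:second_deriv_G}, this closes the proof of Proposition~1: $(\mathcal{H}' \circ h)(\mathbf{q}) \cdot h''(\mathbf{q}) \le 0$ because $\mathcal{H}' > 0$ and $h$ is concave along every line, while $(\mathcal{H}'' \circ h)(\mathbf{q}) \cdot (h'(\mathbf{q}))^2 \le 0$ because $\mathcal{H}'' < 0$; hence $\mathcal{G}''(\mathbf{q}) \le 0$ and $\mathcal{G}(\mathbf{q})$ is concave.

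I do not anticipate a genuine obstacle here — this is a textbook fact. The only point requiring minor care is being explicit that affineness (not mere convexity or concavity) of $f$ is what makes the commutation with convex combinations an \emph{equality} rather than an inequality; if $f$ were only concave, the inner step would push the wrong way unless $\phi$ were also nondecreasing. Since the statement assumes $f$ affine, no such monotonicity hypothesis on $\phi$ is needed, and the proof is a two-line chain of (in)equalities.
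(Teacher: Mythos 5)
Your proof is correct and is essentially identical to the paper's own argument: both exploit that an affine $f$ preserves convex combinations exactly, $f(\theta\mathbf{x}_1 + (1-\theta)\mathbf{x}_2) = \theta f(\mathbf{x}_1) + (1-\theta) f(\mathbf{x}_2)$, and then apply the definition of concavity of $\phi$ at the points $f(\mathbf{x}_1), f(\mathbf{x}_2)$. The extra remarks on how the fact feeds back into Proposition~1 (concavity of $\sqrt{\mathsf{SINR}_{\text{num}, k}^{\mathrm{ul}}(\mathbf{q})}$ and hence of $h(\mathbf{q})$) match the paper's use of the result as well.
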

	\begin{proof}
		See Appendix B.
	\end{proof}
	Since the square root function $\sqrt{x}$ is concave for $x > 0$ and $\mathsf{SINR}_{\text{num}, k}^{\mathrm{ul}}(\mathbf{q}) > 0$ is affine, it follows that the composite function $\sqrt{\mathsf{SINR}_{\text{num}, k}^{\mathrm{ul}}(\mathbf{q})}$ is concave with respect to $\mathbf{q}$. Thus, $h(\mathbf{q})$ is concave and immediately, $h''(\mathbf{q}) < 0$.
	Returning to (\ref{eqn:second_deriv_G}), the first term on the right-hand side is negative since it multiplies a squared term by $\mathcal{H}'' < 0$. The second term is also negative, as it results from the product of $\mathcal{H}' > 0$ and $h''(\mathbf{q})$, which is always negative. Therefore, $\mathcal{G}''(\mathbf{q}) < 0$, proving that $\mathcal{G}(\mathbf{q})$ is concave with respect to $\mathbf{q}$.
	
	\section{Proof of Proposition 2}\label{appii}
	By the nature of the affine function, for any $x_1, x_2 \in \mathbb{R}^{d_1}$ and $\lambda \in [0,1]$, the following holds.
	\begin{align}\label{eqn:affine}
		f(\lambda x_1 + (1 - \lambda) x_2) = \lambda f(x_1) + (1 - \lambda) f(x_2).
	\end{align}
	Now, applying the concave function $\phi$ to (\ref{eqn:affine}) and using the definition of concave functions, we obtain
	\begin{equation}\label{eqn:concave}
		\begin{aligned}
			\phi (f(\lambda x_1 + (1 - \lambda) x_2)) = \phi (\lambda f(x_1) + (1 - \lambda) f(x_2)) \geq \lambda \phi(f(x_1)) + (1 - \lambda) \phi(f(x_2)).
		\end{aligned}
	\end{equation}
	Rewriting the above equation gives
	\begin{equation}\label{eqn:composite}
		\begin{aligned}
			(\phi \circ f)(\lambda x_1 + (1 - \lambda) x_2) 
			&\geq \lambda (\phi \circ f)(x_1) + (1 - \lambda) (\phi \circ f)(x_2).
		\end{aligned}
	\end{equation}
	By the definition of concave functions, it completes the proof that the composite function $(\phi \circ f) (x)$ is concave for x.

	\vspace{-0.1cm}
	\bibliographystyle{IEEEtran}
	\bibliography{[20251216]UC_CF_UL_WSEE}
	
\end{document}